\documentclass{lmcs}

\keywords{wild categories, adjunctions, colimits, homotopy type theory, category theory, synthetic homotopy theory, higher inductive types, modalities}

\usepackage{hyperref}
\usepackage{cleveref}
\usepackage{microtype}
\usepackage{graphicx}
\usepackage[utf8]{inputenc}
\usepackage{bm}
\usepackage{amsfonts}
\usepackage{amssymb}
\usepackage{pgfplots}
\usepackage{amsmath,amsthm}
\usepackage{tikz-cd}
\usetikzlibrary{nfold}
\usepackage{mathtools}
\usepackage{xfrac}
\usepackage{bm}
\usepackage{wasysym}
\usepackage{xcolor}
\usepackage{mathrsfs}
\usepackage{xurl}
\usepackage{thmtools}
\usepackage{scrextend}
\usepackage{mdframed}
\usepackage{comment}
\usepackage{tcolorbox}
\usepackage{quiver}
\usepackage{mathdots}
\usepackage{calligra}
\usepackage{adjustbox}
\usepackage{mleftright}

\makeatletter
\renewcommand*\env@matrix[1][*\c@MaxMatrixCols c]{%
  \hskip -\arraycolsep
  \let\@ifnextchar\new@ifnextchar
  \array{#1}}
\makeatother

\pgfplotsset{
  unit circle/.style={
    width=4cm,
    height=4cm,
    axis lines=middle,
    xtick=\empty,
    ytick=\empty,
    axis equal,
    enlargelimits,
    xmax=1,
    ymax=1,
    xmin=-1,
    ymin=-1,
    domain=0:pi/2
  }
}

\newcommand{\C}{\mathcal C}
\newcommand{\D}{\mathcal D}
\newcommand{\E}{\mathcal E}
\newcommand{\U}{\mathcal{U}}
\newcommand{\V}{\mathcal{V}}
\newcommand{\F}{\mathcal F}
\newcommand{\K}{\mathcal K}
\renewcommand{\L}{\mathsf L}
\newcommand{\Z}{\mathbb Z}

\newcommand{\dom}{\textit{dom}}
\newcommand{\cod}{\textit{cod}}
\newcommand{\nat}{\mathsf{nat}}
\newcommand{\base}{\mathsf{base}}
\newcommand{\loopp}{\mathsf{loop}}
\newcommand{\J}{\mathcal{J}}
\newcommand{\refl}{\mathsf{refl}}
\newcommand{\ind}{\mathsf{ind}}
\newcommand{\inl}{\mathsf{inl}}
\newcommand{\inr}{\mathsf{inr}}
\newcommand{\0}{\mathbf{0}}
\newcommand{\1}{\mathbf{1}}
\newcommand{\2}{\mathbf{2}}
\newcommand{\lN}{\mleft\lVert}
\newcommand{\rN}{\mright\rVert}
\newcommand{\modal}{{\ensuremath{\ocircle}}}

\DeclareMathOperator{\auto}{\mathsf{auto}}
\DeclareMathOperator{\cofiber}{\mathsf{cof}}
\DeclareMathOperator{\assoc}{\mathsf{assoc}}
\DeclareMathOperator{\transport}{\mathsf{transp}}
\DeclareMathOperator{\ap}{\mathsf{ap}}
\DeclareMathOperator{\apd}{\mathsf{apd}}
\DeclareMathOperator{\pr}{\mathsf{pr}}
\DeclareMathOperator{\glue}{\mathsf{glue}}
\DeclareMathOperator{\homm}{\mathsf{hom}}
\DeclareMathOperator{\op}{op}
\DeclareMathOperator{\colimm}{\mathsf{colim}}
\DeclareMathOperator{\limm}{\mathsf{lim}}
\DeclareMathOperator{\homogpth}{\textsf{homog-pth}}
\DeclareMathOperator{\colim}{\mathrm{colim}}
\DeclareMathOperator{\idd}{\mathsf{id}}
\DeclareMathOperator{\fun}{\mathsf{fun}}
\DeclareMathOperator{\bp}{\mathsf{bp}}
\DeclareMathOperator{\ty}{\mathsf{ty}}
\DeclareMathOperator{\pt}{\mathsf{pt}}
\DeclareMathOperator{\obb}{\mathsf{Ob}}
\DeclareMathOperator{\ev}{\mathsf{ev}}
\DeclareMathOperator{\postcomp}{\mathsf{postcomp}}

\newcommand{\myhref}[2]{%
  \href{#1}{\textcolor{blue!30!black}{#2}}%
}
\theoremstyle{definition}\newtheorem{note}[thm]{Note}

\begin{document}

\title{On Left Adjoints Preserving Colimits in Homotopy Type Theory}
\titlecomment{{\lsuper*}This paper is an extended version of \cite{lapccsl}.}

\author[P.~Hart]{Perry Hart\lmcsorcid{0000-0002-7247-362X}}

\address{University of Minnesota, Minneapolis, MN, USA}
\email{hart1262@umn.edu}

\begin{abstract}
We examine how the standard proof that left adjoints preserve colimits behaves in the setting of wild categories, a natural setting for synthetic homotopy theory inside homotopy type theory. We show that the proof may fail for adjunctions between wild categories and even produce a wild left adjoint that fails to preserve colimits. Our core contribution, however, is a sufficient condition on the left adjoint for the proof to go through. The condition, which we call \emph{$2$-coherence}, expresses that the naturality structure of the hom-isomorphism commutes with composition of morphisms. We present two useful examples of this condition in action. First, we use it, along with a new version of a known trick for homogeneous types, to show that the suspension functor, as well as a generalization thereof, preserves graph-indexed colimits. Second, we show that every modality, viewed as a functor on coslices of a type universe, is $2$-coherent as a left adjoint to the forgetful functor from the subcategory of modal types, thereby proving this subcategory is cocomplete. We have formalized our main results in Agda.
\end{abstract}

\maketitle

\section{Introduction}\label{sec:intro} 

In category theory, a basic and eminently useful fact is that \emph{left adjoints preserve colimits (LAPC)}. We would like to invoke this classical theorem in the categorical setting of synthetic homotopy theory, the axiomatic and usually type-theoretic study of topological spaces with higher-dimensional structure. This would let us produce new universal constructions of spaces from existing ones via purely algebraic methods. For synthetic homotopy theory carried out in homotopy type theory (HoTT), the appropriate categorical setting is that of \emph{wild categories}, the canonical examples of which are type universes. This notion is a type-theoretic approximation of an $\mleft(\infty,1\mright)$-category that specifies the data of a $1$-category but omits higher coherence data. Despite its naivete, this setting is expressive enough to study concepts like (co)limits and adjunctions inside type theory. 

We would like to port \emph{LAPC} to adjunctions between wild categories. In particular, we would like to port the ``standard'' proof, i.e., the proof one would expect to see based on the category theory literature. This, however, turns out to be harder than one might hope: we produce, inside HoTT, an example of such an adjunction for which the proof fails. In fact, we produce something stronger: an adjunction between wild categories that simply fails to preserve colimits. Nevertheless, we identify a sufficient condition for the proof to go through. Roughly, it expresses that the adjunction data interacts nicely with the left adjoint's (proof-relevant) composition law. With this condition, combined with a higher version of a known technique based on homogeneous types, we show that suspension (and a generalization of it), a known example of a left adjoint, preserves (graph-indexed) colimits, which has applications to the theory of acyclic types and to homology theory. We also show that every modality (such as truncation), viewed as a functor on coslices of a type universe, satisfies the condition as a left adjoint to the forgetful functor. As a result, the full wild subcategory of modal types inherits colimits from the ambient coslice. Our proof of this fact differs from the one described for truncations by \cite[Section 7.4]{Uni13}. Ours is ultimately simpler by placing modalities in the general context of left adjoints. We have formalized all our main results in Agda.\footnote{See \cite{2cohagda-pub} for the Agda codebase. One may type-check the mechanized proofs for this paper by running \texttt{Dockerfile.lmcs}, which builds the appropriate version of Agda and then takes about 25 minutes to type-check the relevant files.} 

\subsection{Motivation}

To motivate our work, we should review the well-known proofs of \emph{LAPC} from classical category theory and explain why the one we choose is the right one to port to wild categories. In addition, we should explain why the issue of porting it deserves the HoTT community's attention. We assume the reader is familiar with the basic notions of category theory.

Consider an adjunction $L \dashv R$ between $1$-categories $\C$ and $\D$. Let $\J$ be a small $1$-category. The classical theorem states that $L$ preserves $\J$-shaped colimits, and it has two well-known proofs. The first requires that $\C$ and $\D$ admit global colimit functors $\colim_{\J} : \C^{\J} \to \C$ and $\colim_{\J} : \D^{\J} \to \D$~\cite[Section V.5]{CTWM}. The proof assumes these colimit functors satisfy some coherence conditions that are automatically true for $1$-categories (but \emph{not} for wild ones). It proceeds by using the uniqueness of left adjoints to define an isomorphism $\varphi : \colim_{\J} \circ L^{\J} \cong  L \circ \colim_{\J}$. By unfolding the units of two composite adjunctions derived from $L \dashv R$ and showing that $\varphi$ commutes with these units, we can deduce that $\varphi$ maps the canonical cocone on $\colim_{\J}(L(F))$ to the induced one on $L(\colim_{\J}(F))$ for all diagrams $F : \J \to \C$. Proving that $\varphi$ commutes in this way tacitly uses coherence conditions on $L \dashv R$ that hold for $1$-categories. Also, to conclude that $L(\colim_{\J}(F))$ is colimiting, the proof tacitly uses the pentagon identity of $\D$ (which holds trivially) to transfer the colimiting property of $\colim_{\J}(L(F))$.

Instead of requiring global colimit functors, the second proof starts with a specific colimit $\colim_{\J}(F)$ of a diagram $F : \J \to \C$ and shows that $L(\colim_{\J}(F))$ is a colimit under $L(F)$. Like the first proof, it secretly uses coherence conditions that hold for $1$-categories, a point we'll return to. It argues that for all $Y \in \obb(\D)$, the following chain of isomorphisms with $C \coloneqq \colim_{\J}(F)$ equals the canonical post-composition map~\cite[Theorem 4.5.2]{CTIC}:
\[ \label{eq:iso}
  \homm_{\D}(L(C), Y) \cong \homm_{\C}(C, R(Y)) \cong \limm_i(\homm_{\C}(F_i, R(Y))) \cong \limm_i(\homm_{\D}(L(F_i), Y))\tag{\texttt{iso}}
\]
This means that the induced cocone on $L(\colim_{\J}(F))$ is indeed colimiting, i.e., $L$ preserves colimits. Besides avoiding global colimit functors, this proof argues directly in terms of hom-isomorphisms and their naturality conditions---data directly supplied by the usual definition of adjunctions between wild categories. This helps us formulate further laws that such adjunctions must satisfy for the proof to work. Finally, it doesn't rely on bicategorical structure of $\C$ or $\D$. Overall, the second proof, which we call the \emph{standard proof}, is better for wild categories.

So, what makes porting the standard proof an interesting problem? The chain of isomorphisms \eqref{eq:iso} is not hard to replay in wild category theory. Due to the secret coherence conditions, however, proving that it equals the canonical map becomes a problem. In fact, this equality is sometimes false. This problem is surprising at first glance and easy to miss. It indicates a subtle mismatch between the data used to construct the cocone on $L(\colim_{\J}(F))$ and the data used to construct a hom-type adjunction. (In our setting, $\homm$ is a family of types, not necessarily sets.) The latter uses only the $0$- and $1$-dimensional data of $L$, the data coming from the underlying directed graph of $\C$. The former, however, also uses the composition law of $L$, a $2$-dimensional datum. This mismatch has strange effects: we can build two naturally isomorphic left adjoints such that the standard proof goes through for one but not the other. One of the chief virtues of this paper is that it puts this issue into the literature and sets the record straight on the approach to \emph{LAPC} that was expected to work.

\subsection{Contributions}

In this section, we explain the contributions of the paper and its organization. We start by outlining the heart of the paper: a coherence condition on the data of a (hom-type) adjunction between wild categories that guarantees the left adjoint preserves colimits. Afterward, we outline two applications of this coherence condition in synthetic homotopy theory. Finally, we explain what is new in this paper compared with \cite{lapccsl}. Throughout the paper, we'll provide hyperlinks, in blue, to associated Agda code. 

\subsubsection{$2$-coherent left adjoints (\autoref{2Adj})}

Let $\C$ and $\D$ be wild categories. (We review the relevant concepts of wild category theory in \cref{wildcats}.) Our work centers on a new notion of coherence for adjunctions between $\C$ and $\D$.  Such an adjunction consists of functors $L : \C \to \D$ and $R : \D \to \C$ together with
a family of type equivalences $\psi : \prod_{X : \obb(\D)}\prod_{ A : \obb(\C)}\homm_{\D}(L(A), X) \xrightarrow{\simeq} \homm_{\C}(A,R(X))$ and witnesses $\nat_{\cod}$ and $\nat_{\dom}$ of the naturality of $\psi$ in $X$ and in $A$. The main point of this paper is that, despite being a direct translation of the classical notion, this definition is not coherent enough, due to the proof-relevant identity structure of the hom types. Indeed, it doesn't let us prove left adjoints preserve colimits, the defining property of left adjoints between locally presentable categories.
To solve this problem, we introduce the following coherence condition.

Given an adjunction between $\C$ and $\D$, we say that $L$ is \emph{$2$-coherent} if for all suitable morphisms $h_1$, $h_2$, and $h_3$, the identity $\psi(h_1) \circ h_2 \circ h_3 = \psi(h_1 \circ L(h_2 \circ h_3))$ obtained by applying $\nat_{\dom}$ multiple times equals the identity obtained by applying the composition law $L_{\circ}$ of $L$. 
This nice interaction between $\nat_{\dom}$ and $L_{\circ}$ holds automatically for $1$-categories, in which case these two equalities are proof-irrelevant. Also, it holds for adjunctions between $\mleft(\infty,1\mright)$-categories by virtue of the infinite tower of coherence data they encode. For wild categories, however, adjunctions may fail to satisfy it.

We prove that $2$-coherent left adjoints preserve colimits. The proof proceeds entirely by algebraic manipulation of the adjunction data. Conceptually, it is quite direct as it relies on just three foundational ingredients of HoTT: the naturality of homotopies, the triangle identity of equivalences, and the structure identity principle. 

During the proof, we must take care to eliminate the data witnessing that $\psi$ is an equivalence. Otherwise, we would need to include this data in the $2$-coherence condition, which would make it far less tractable. Indeed, the condition we arrive at is relatively simple to check and thus quite useful in practice. We just need to know how to compute $\nat_{\dom}$ and $L_{\circ}$. In \cref{suspsum,modsum}, we outline natural examples of left adjoints along with methods for proving that they are $2$-coherent.

\subsubsection{Suspension preserves colimits (\autoref{SuspCol})} \label{suspsum}

The suspension endofunctor $\Sigma : \U_{\ast} \to \U_{\ast}$ on the wild category of pointed types is  critical to synthetic homotopy theory. For a while, it has been known that $\Sigma$ is left adjoint to the loop space functor $\Omega$ in HoTT. Its preservation of colimits has been expected to follow from \emph{LAPC} in the usual way. Yet, this paper shows that the usual way requires a  coherence between the adjunction $\Sigma \dashv \Omega$ and $\Sigma$'s composition law, which makes the proof trickier than expected. 

We verify that this coherence holds, i.e., that $\Sigma$ is $2$-coherent, thereby verifying that $\Sigma$ preserves colimits. (As we'll see, our proof applies, for each pointed type $X$, to the \emph{join} $X \ast {-} : \U_{\ast} \to \U_{\ast}$, which is equivalent to $\Sigma$ when $X \equiv \2$ and is critical to synthetic homotopy theory in its own right~\cite{brunthesis,rijkejoin}.) In this case, the final part of the proof of $2$-coherence is infeasible to perform  directly. Instead, we get it for free by proving a new, higher version of Cavallo's trick for homogeneous types, which include all loop spaces. 

This infeasibility highlights a major difference between our type system---Book HoTT---and cubical type theory~\cite{cubagda}. Unlike Book HoTT, cubical supports \emph{definitional} $\beta$-rules for path constructors in higher inductive types (HITs), such as suspension types. Such support greatly simplifies the proof in question as it erases many postulated equalities that we must carry around.  Although constructions with HITs tend to be much harder in Book HoTT~\cite{cubsyn}, they are still valuable. Indeed, Book HoTT has models in all $\mleft(\infty,1\mright)$-toposes~\cite{HITsem, topoimod}.\footnote{Closure of universes under parameterized HITs still needs to be fully worked out.} By contrast, it's not yet known whether Cubical Agda's type theory has a model Quillen equivalent to a model category presenting spaces. Moreover, cubical is an extension of Book HoTT~\cite[Section 2.16]{ABCHFL}, so we can interpret our results into it.

Inside HoTT, the fact that $\Sigma$ preserves colimits has useful consequences. It implies that the pointed acyclic types~\cite{acyclic} are closed under colimits in $\U_{\ast}$. Moreover, it puts on firm footing a key step of Graham's construction of stable homotopy as a homology theory: proving that $\Sigma$ preserves cofibers~\cite[Corollary 2.2]{graham}. 

\subsubsection{Colimits of modal types (\autoref{ModCol})} \label{modsum}

Modalities are functors $\modal : \U \to \U_{\modal}$ on a type universe $\U$ that arise as reflectors into well-behaved subuniverses $\U_{\modal}$ of $\U$~\cite{RS}. Although they are well-studied in HoTT~\cite{modaldes, modalchar}, their interaction with colimits could be explained better. Since reflectors are by definition left adjoints, we should expect that modalities preserve colimits.
In the case of pushouts, this property already has a proof for $n$-truncations $\lN{-}\rN_n$~\cite[Section 7.4]{Uni13}, which are examples of modalities, and it should extend easily to all graph-indexed colimits. The problem is that this proof, which we call the \emph{Book proof}, relies on the particular computational behavior of the composition law of $\lN{-}\rN_n$. Thus, it doesn't generalize to arbitrary left adjoints.

We offer a different proof of the same property by showing that every modality $\modal$ is a $2$-coherent left adjoint. By fitting into the general framework of \emph{LAPC}, our proof gets rid of the ad hoc steps required by the Book proof. As a result, ours  amounts to an easy application of the induction principle of $\modal$. Moreover, we argue that ours matches the usual classical proof for reflective subcategories because one would normally view colimit preservation here as a special case of \emph{LAPC}.\footnote{This goes against \cite[Section 7.7]{Uni13}'s claim that the Book proof matches the usual classical one.} Hence our proof has some advantages over the Book proof.

In fact, given a modality $\modal$, we prove the more general fact that the induced functor $\modal^A : A/\U \to \mleft(A/\U\mright)_{\modal}$ on an arbitrary coslice of $\U$ is $2$-coherent. (We recover the original case by setting $A$ to $\0$.) Previously, Hart and Favonia used that $\lN{-}\rN_n^A$ is $2$-coherent in order to construct colimits of higher groups~\cite[Section 7]{CSL25}. Here we offer a concise proof that is formulated for all modalities. Finally, we use the colimit preservation of $\modal^A$ together with some general results about \emph{wild bicategories} to build colimits in $\mleft(A/\U\mright)_{\modal}$ from those in $A/\U$, the latter of which are explicitly constructed by \cite[Section 5]{CSL25}.

\subsubsection*{New contributions}
The current paper expands \cite{lapccsl} with two new mathematical contributions. First, we strengthen the counterexample to the standard proof by presenting a new wild left adjoint that provably fails to preserve colimits (\autoref{truecexmp}), settling the question of whether wild left adjoints preserve colimits. We also prove that the original example does preserve colimits, thereby showing that $2$-coherence is not a necessary condition for cocontinuity. Second, we extend the $2$-coherence proof for the suspension to the (unary) join functor $X \ast {-}$ (\autoref{joinpres}). 

We have formalized both of these new contributions in Agda. Ultimately, the only result in this paper left unformalized is the original counterexample to the standard proof (\autoref{eq-counter}).

\section{Additional related work}

\subsection{Wild category theory}

Our work establishes a fundamental property of well-behaved adjunctions between wild categories. Motivated by synthetic homotopy theory, it fits neatly into the theory of wild categories developed for this purpose~\cite{Hur, RK, smashmon}. More generally, it continues the study of adding higher coherence data to wild-categorical notions. So far, this study has focused on adding conditions internalizing the axioms of a $\mleft(2,1\mright)$-category to the wild category itself~\cite{chen, CK, CSL25}. The literature calls the resulting concept \emph{wild $2$-precategory}, \emph{$2$-coherent wild category}, or \emph{wild bicategory}.
In this paper, we focus on a different aspect of $2$-coherence: adding it to data \emph{between} wild categories.   
\subsection{Homogeneous types}

For the application to the suspension functor (\autoref{SuspCol}), we build on the theory of \emph{homogeneous types} in HoTT~\cite[Section 2]{banded}. These are pointed types that are independent of basepoint in a strong sense. The key feature of such types is that proving identities about pointed maps into them is considerably easier than about arbitrary pointed maps. This feature, known as \emph{Cavallo's trick}, can make normally intractable computations in higher path algebra tractable. For example, Ljungstr\"om adapted Cavallo's trick to show that the smash product forms a symmetric monoidal product on the wild category of pointed types~\cite{smashmon}. We provide a new adaptation to handle the $2$-coherence condition for the suspension and the join.

\section{Background on type theory}\label{Backg}

We review some basic constructions in HoTT that are important for our work. We assume the reader is familiar with Martin-L\"of type theory (MLTT), the core type system of HoTT, in the style of~\cite{Uni13}. Notably, MLTT is sufficient for the core of our work: all of \autoref{2Adj} is carried out in MLTT (with function extensionality). For \cref{SuspCol,ModCol}, we postulate a simple class of HITs: pushout types~\cite[Section 6.8]{Uni13}. In particular, \autoref{SuspCol} focuses on \emph{suspensions}, a kind of pushout.

\subsection{Type system}

We review three constructions in our type system. The first is the function $\ap_f : \mleft(x = y\mright) \to \mleft(f(x) = f(y)\mright)$ defined by path induction for all functions $f : X \to Y$ and $x, y : X$. (We use $=$ for the identity/path type and $\equiv$ for definitional equality.) If we view $X$ as an $\infty$-groupoid, then $\ap$ is the action of $f$ on morphisms of $X$, thereby exhibiting $f$ as a functor. A key feature of $\ap$ is the following naturality law.
\begin{lem}[Homotopy naturality] \label{homnat}
Let $f, g : X \to Y$. For all $x, y : X$, $p : x = y$, and $H : f \sim g$, we have a commuting square of identities
\[\begin{tikzcd}
	{f(x)} && {g(x)} \\
	{f(y)} && {g(y)}
	\arrow["{H(x)}", Rightarrow, no head, from=1-1, to=1-3]
	\arrow["{\ap_f(p)}"', Rightarrow, no head, from=1-1, to=2-1]
	\arrow["{\ap_g(p)}", Rightarrow, no head, from=1-3, to=2-3]
	\arrow["{H(y)}"', Rightarrow, no head, from=2-1, to=2-3]
\end{tikzcd}\] 
\end{lem}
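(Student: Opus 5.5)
We prove the statement by path induction on $p$, keeping $H$ fixed. Concretely, the commuting square is the identity
\[
  H(x) \cdot \ap_g(p) = \ap_f(p) \cdot H(y)
\]
in the type $f(x) = g(y)$, where $\cdot$ denotes path concatenation in diagrammatic order. We regard this as a type family indexed by $y : X$ and $p : x = y$, with $x$, $f$, $g$ and $H$ fixed. By the based path induction principle, it suffices to treat the case $y \equiv x$ and $p \equiv \refl_x$.

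In that case $\ap_f(\refl_x) \equiv \refl_{f(x)}$ and $\ap_g(\refl_x) \equiv \refl_{g(x)}$ hold definitionally, by the defining clause of $\ap$. The goal therefore becomes
\[
  H(x) \cdot \refl_{g(x)} = \refl_{f(x)} \cdot H(x).
\]
Both sides equal $H(x)$ by the unit laws for concatenation. Depending on how concatenation is defined, one of these unit laws holds definitionally and the other is a propositional lemma, itself proved by path induction on $H(x)$ in \cite{Uni13}. Composing the right-unit witness with the inverse of the left-unit witness gives the required identity.

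There is no real obstacle here. The only care needed is to induct on $p$ rather than on $H(x)$: $H(x)$ is an arbitrary path between the non-free endpoints $f(x)$ and $g(x)$, so path induction cannot be applied to it directly. Since $H$ depends on $x$ but not on $p$, it can simply be held fixed throughout the induction on $p$.
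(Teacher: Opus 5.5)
Your proof is correct and is the standard argument: based path induction on $p$ reduces the square to $H(x) \cdot \refl_{g(x)} = \refl_{f(x)} \cdot H(x)$, which the unit laws close, as in \cite[Lemma 2.4.3]{Uni13}. The paper states this lemma as background without proof. One small quibble: with the Book's double-induction definition of concatenation, neither unit law is definitional, but both hold propositionally, so your argument is unaffected.
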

Here, $f_1 \sim f_2 \coloneqq \prod_{x : X} f_1(x) = f_2(x)$ for any dependent functions $f_1, f_2 : \prod_{x: X}Y(x)$, called the type of \emph{homotopies} between $f_1$ and $f_2$. If $f_1 \sim f_2$, we say that $f_1$ and $f_2$ are \emph{homotopic}.

The second is the notion of \emph{half-adjoint equivalence}. Let $f : X \to Y$ be a function. We say that $f$ is a \emph{half-adjoint equivalence}, or just \emph{equivalence}, if it has a function $g : Y \to X$, homotopies $\eta_f : g \circ f \sim \idd_X$ and $\epsilon_f : f \circ g \sim \idd_Y$, and a triangle identity $\tau_f(x) : \ap_f(\eta_f(x)) = \epsilon_f(f(x))$ for every $x : X$. We may denote $g$ by $f^{-1}$. We use $\simeq$ to refer to equivalences. A function is an equivalence if and only if it is \emph{bi-invertible}, i.e., has a right inverse $s$ and a left inverse $r$~\cite[Corollary 4.3.3]{Uni13}. Moreover, if  a function is an equivalence, then it has contractible fibers~\cite[Section 4.4]{Uni13} (i.e., each of its fibers is equivalent to $\1$).

The third is the \emph{transport} function $\transport^Y : \prod_{x, y : X}\prod_{p : x = y}Y(x) \to Y(y)$ for any type family $Y$ over $X$.  This notion gives us a dependent version of $\ap$: if $f : \prod_{x : X}Y(x)$, then we have a function $\apd_f : \prod_{x, y : X}\prod_{p : x = y} \transport^Y(p, f(x)) = f(y)$. As a result, we can generalize \autoref{homnat} as follows: for all $f, g : X \to Y$, $x, y : X$, $p : x = y$, and $q : f(x) = g(x)$, we have a path $\ap_f(p) \cdot \transport^{f \sim g}(p,q) = q \cdot \ap_g(p)$. The transport function is essential for stating the induction principle of HITs, such as suspensions.

\subsection{Suspensions} \label{suspdef}

For all functions $f : X \to Y$, the \emph{cofiber $\cofiber(f)$ of $f$} is the pushout of the span $\1 \leftarrow X \xrightarrow{f} Y$. Let $X$ be a type. The \emph{suspension $\Sigma(X)$ of $X$} is the cofiber of $X \to \1$. Explicitly, it is the pushout
\[
\begin{tikzcd}[column sep = 34, row sep = 30]
	X & \1 \\
	\1 & {\Sigma(X)}
	\arrow[from=1-1, to=1-2]
	\arrow[""{name=0, anchor=center, inner sep=0}, from=1-1, to=2-1]
	\arrow[""{name=1, anchor=center, inner sep=0}, "\inr", from=1-2, to=2-2]
	\arrow["\inl"', from=2-1, to=2-2]
	\arrow["\lrcorner"{anchor=center, pos=0.125, rotate=180}, draw=none, from=2-2, to=1-1]
	\arrow["\glue"{marking, allow upside down}, draw=none, from=0, to=1]
\end{tikzcd}
\] where $\glue : \prod_{x : X}\inl(\ast) = \inr(\ast)$. We denote $\inl(\ast)$ and $\inr(\ast)$ by $\mathsf{N}$ and $\mathsf{S}$, respectively, and consider $\mathsf{N}$ the basepoint of $\Sigma(X)$. The induction principle for $\Sigma(X)$ states that for every type family $E$ over $\Sigma(X)$ with elements
\begin{equation*}
\begin{aligned}[c]
 t_{\mathsf{N}} &  \ : \ E(\mathsf{N})
\\  t_{\mathsf{S}} &  \ : \ E(\mathsf{S})
\end{aligned}
\qquad \qquad
\begin{aligned}[c]
  T &  \ : \ \prod_{x : X}\transport^E(\glue(x), t_{\mathsf{N}}) = t_{\mathsf{S}}
\end{aligned}
\end{equation*}
we have a function $\ind(E, t_{\mathsf{N}}, t_{\mathsf{S}}, T) : \prod_{z : \Sigma(X)}E(z)$ that satisfies the definitional equalities 
\[
\ind(E, t_{\mathsf{N}}, t_{\mathsf{S}}, T)(\mathsf{N})  \ \equiv \ t_{\mathsf{N}} \qquad \ind(E, t_{\mathsf{N}}, t_{\mathsf{S}}, T)(\mathsf{S})  \  \equiv \  t_{\mathsf{S}}
\] and is equipped with a \emph{typal $\beta$-rule}: an identity $\beta_{\ind(E, t_{\mathsf{N}}, t_{\mathsf{S}}, T)}(x) : \apd_{\ind(E, t_{\mathsf{N}}, t_{\mathsf{S}}, T)}(\glue(x)) = T(x)$. In the non-dependent case, this principle is called the \emph{recursion principle}.

Let $\mleft(X,x_0\mright), \mleft(Y, y_0\mright) : \U_{\ast}$ be pointed types in a universe $\U$ and $\mleft(f, f_0\mright) : \mleft(X,x_0\mright) \to_{\ast} \mleft(Y,y_0\mright)$ be a pointed map. We have a pointed map $\Sigma(f, f_0) : \Sigma(X, x_0) \to_{\ast} \Sigma(Y,y_0)$ defined by recursion on $\Sigma(X)$, which trivially preserves the basepoint. Note that $\beta_{\Sigma(f, f_0)}(x) : \ap_{\Sigma(f, f_0)}(\glue(x)) = \glue(f(x))$ for each $x : X$.

\section{Wild category theory} \label{wildcats}

In this section, we record essential concepts and constructions in wild category theory, including adjunctions. The reader will notice that the basic definitions are naive translations of their classical counterparts.

The key distinction between wild categories and the categories of \cite[Section 9.1]{Uni13} is that the latter have hom types that behave as \emph{sets}, i.e., have trivial identity types, so that all higher coherences between morphisms hold trivially. In this way, the latter internalize $1$-categories in HoTT. By contrast, wild categories simply ignore such coherences. In synthetic homotopy theory, many wild categories have hom types that are not sets, so developing the theory of wild categories is worthwhile.

\begin{defi}
  A \emph{wild category (relative to universes $\U$ and $\V$)} is a tuple consisting of a type $\obb : \U$ of objects, a family $\homm : \obb \to \obb \to \V$ of hom types, identity morphisms $\idd$, a composition operation $\circ$, left and right unit laws for $\circ$, and an associativity law $\assoc$ for $\circ$.
  
  Here, the unit and associativity laws are paths: in particular, $\assoc(f,g,h) : \mleft(f \circ g\mright) \circ h = f \circ \mleft(g \circ h\mright)$ for all composable morphisms $f$, $g$, and $h$.
\end{defi}

\begin{exa}
Let $A$ be a type. The wild category $A/\U$ has objects $\sum_{X : \U}A \to X$ and morphisms $X \to_A Y \coloneqq \sum_{k : \pr_1(X) \to \pr_1(Y)}k \circ \pr_2(X) \sim \pr_2(Y)$. Composition and associativity are defined easily via path induction (and the characterization of identity types of $\Sigma$-types). The wild category of pointed types $\U_{\ast}$, which is isomorphic to $\1/\U$, has a similar structure.
\end{exa}

The following notion generalizes the univalence axiom~\cite[Axiom 2.10.3]{Uni13} and can significantly simplify proofs. We will invoke it in the proof of \autoref{modcolbuild}.
\begin{defi} \label{unvwc}
  A wild category $\C$ is \emph{univalent} if for all $A , B : \obb(\C)$, the canonical function $\mleft(A = B\mright) \to \mleft(A \simeq_{\C} B\mright)$ is an equivalence. Here, elements of the right-hand type are \emph{equivalences in $\C$}, defined as bi-invertible morphisms.
\end{defi}

\begin{defi}
Let $\C$ and $\D$ be wild categories.

\begin{enumerate}
\item A \emph{(wild) functor} $F : \C \to \D$ from $\C$ to $\D$ consists of a function $F_0 : \obb(\C) \to \obb(\D)$ and an action on morphisms $F_1  : \homm_{\C}(X,Y) \to \homm_{\D}(F_0(X), F_0(Y))$  along with a composition law $F_{\circ}(g,f)  : F_1(g) \circ F_1(f) = F_1(g \circ f) $ for all composable $g$ and $f$ and an identity law $F_{\idd}(X)   : \idd_{F_0(X)} = F_1(\idd_X)$ for all $X : \obb(\C)$.

We may refer to $F_0$ or $F_1$ by just $F$. We call $F_0$ and $F_1$ the \emph{$0$-} and \emph{$1$-dimensional} data of $F$, respectively. We call $F_{\circ}$ and $F_{\idd}$ its \emph{$2$-dimensional} data.

\item Let $F, G : \C \to \D$ be functors. A \emph{natural transformation} $\tau : F \Rightarrow G$ from $F$ to $G$ consists of functions 
$\tau_0 :\prod_{X : \obb(\C)}\homm_{\D}(F(X), G(X))$ and $\tau_1  : \prod_{X, Y : \obb(\C)}\prod_{f : \homm_{\C}(X , Y)}G(f) \circ \tau_0(X) = \tau_0(Y) \circ F(f)$. (If $\D \equiv \U$, we may tacitly use the equivalent type $G(f) \circ \tau_0(X) \sim \tau_0(Y) \circ F(f)$ instead.)
We say $\tau$ is a \emph{natural isomorphism} if each $\tau_0(X)$ is an equivalence.  
\end{enumerate}
\end{defi}

\begin{defi}[Adjunction]\label{adjdef}
Let $L : \C \to \D$ and $R: \D \to \C$ be functors of wild categories.  A \emph{(wild) adjunction $L \dashv R$} is a family of equivalences
$\psi  : \homm_{\D}(L(A), X) \simeq \homm_{\C}(A,R(X))$ equipped with functions witnessing that $\psi$ is natural in $X$ and $A$, respectively:
\begin{align*}
\nat_{\cod} & \ : \ \prod_{A :\obb(\C)}\prod_{ X, Y : \obb(\D)}\prod_{g : \homm_{\D}(X,Y)}\prod_{h : \homm_{\D}(L(A), X)}R(g) \circ \psi(h) = \psi(g \circ h)
\\ \nat_{\dom} & \ : \ \prod_{Y : \obb(\D)}\prod_{ A, B : \obb(\C)}\prod_{f : \homm_{\C}(A,B)}\prod_{h : \homm_{\D}(L(B), Y)}\psi(h) \circ f = \psi(h \circ L(f))
\end{align*} 
\end{defi}
For each adjunction $L \dashv R$, we also have naturality squares
\[\begin{tikzcd}[column sep = 22]
	{\homm_{\C}(A, R(X))} & {\homm_{\C}(A, R(Y))} & {\homm_{\C}(B, R(Y))} & {\homm_{\C}(A, R(Y))} \\
	{\homm_{\D}(L(A), X)} & {\homm_{\D}(L(A), Y)} & {\homm_{\D}(L(B), Y)} & {\homm_{\D}(L(A),Y)}
	\arrow[""{name=0, anchor=center, inner sep=0}, "{{R(g) \circ {-}}}", from=1-1, to=1-2]
	\arrow["{{\psi^{-1}}}"', from=1-1, to=2-1]
	\arrow["{{\psi^{-1}}}", from=1-2, to=2-2]
	\arrow[""{name=1, anchor=center, inner sep=0}, "{{{-} \circ f}}", from=1-3, to=1-4]
	\arrow["{{\psi^{-1}}}"', from=1-3, to=2-3]
	\arrow["{{\psi^{-1}}}", from=1-4, to=2-4]
	\arrow[""{name=2, anchor=center, inner sep=0}, "{{g \circ {-}}}"', from=2-1, to=2-2]
	\arrow[""{name=3, anchor=center, inner sep=0}, "{{{-} \circ L(f)}}"', from=2-3, to=2-4]
	\arrow["{{\widetilde{\nat}_{\cod}(g)}}"{description}, draw=none, from=0, to=2]
	\arrow["{{\widetilde{\nat}_{\dom}(f)}}"{description}, draw=none, from=1, to=3]
\end{tikzcd}\]
Here, the right-hand square is defined as the top path fitting into the commuting square
\[
\begin{tikzcd}[column sep = huge]
	{\psi^{-1}(h) \circ L(f)} && {\psi^{-1}(h \circ f)} \\
	{\psi^{-1}(\psi(\psi^{-1}(h) \circ L(f)))} && {\psi^{-1}(\psi (\psi^{-1}(h)) \circ f)}
	\arrow["{{\widetilde{\nat}_{\dom}(f,h)}}", Rightarrow, no head, dashed, from=1-1, to=1-3]
	\arrow["{{\eta_{\psi}(\psi^{-1}(h) \circ L(f))}}"', equals, from=1-1, to=2-1]
	\arrow["{{\ap_{\psi^{-1}}(\nat_{\dom}(f, \psi^{-1}(h)))}}"', equals, from=2-1, to=2-3]
	\arrow["{{\ap_{\psi^{-1}}(\ap_{{-} \circ f}(\epsilon_{\psi}(h)))}}"', equals, from=2-3, to=1-3]
\end{tikzcd}
\]
where $\eta_{\psi}$ and $\epsilon_{\psi}$ are from the equivalence data of $\psi$. The left-hand square is defined similarly.
\begin{rem}
Our results will make no use of $F_{\idd}$ or $\nat_{\cod}$, so we could have omitted them.
\end{rem}
\noindent In particular, $\mleft(\psi^{-1} , \widetilde{\nat}_{\dom} \mright)$ is a natural isomorphism $\homm_{\C}({-}, R(Y)) \Rightarrow \homm_{\D}(L({-}), Y)$ for each $Y: \obb(\D)$. The terms $\nat_{\dom}$ and $\widetilde{\nat}_{\dom}$ are related by the following \emph{exchange law}.

\begin{lem}\label{exchadj}
Let $\mleft(\psi, \nat_{\cod}, \nat_{\dom}\mright) : L \dashv R$. For all $f : \homm_{\C}(A,B)$ and $v : \homm_{\D}(L(B), Y)$, we have a commuting square
\[\begin{tikzcd}
	{\psi^{-1}(\psi(v))\circ L(f)} && {\psi^{-1}( \psi(v) \circ f)} \\
	{v \circ L(f)} && {\psi^{-1}(\psi(v \circ L(f)))}
	\arrow["{\widetilde{\nat}_{\dom}(f, \psi(v))}", equals, from=1-1, to=1-3]
	\arrow[""{name=0, anchor=center, inner sep=0}, "{\ap_{{-} \circ L(f)}(\eta_{\psi}(v))}"', equals, from=1-1, to=2-1]
	\arrow[""{name=1, anchor=center, inner sep=0}, "{\ap_{\psi^{-1}}(\nat_{\dom}(f,v))}", equals, from=1-3, to=2-3]
	\arrow["{\eta_{\psi}( v \circ L(f))}"', equals, from=2-1, to=2-3]
	\arrow["{\mathsf{exch}(f, v)}"{description}, draw=none, from=0, to=1]
\end{tikzcd}\]
\end{lem}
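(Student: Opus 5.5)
The plan is to unfold $\widetilde{\nat}_{\dom}(f,\psi(v))$ and reduce the square to \autoref{homnat} and the triangle identity $\tau_\psi$. By definition, with $h \coloneqq \psi(v)$, the top path of the square is
\[
\eta_{\psi}(\psi^{-1}(\psi(v)) \circ L(f))^{-1} \cdot \ap_{\psi^{-1}}(\nat_{\dom}(f, \psi^{-1}(\psi(v)))) \cdot \ap_{\psi^{-1}}(\ap_{{-} \circ f}(\epsilon_{\psi}(\psi(v)))).
\]
The first step is to rewrite the last factor with the triangle identity $\tau_\psi(v) : \ap_{\psi}(\eta_\psi(v)) = \epsilon_\psi(\psi(v))$. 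Together with functoriality of $\ap$ under composition, this turns it into $\ap_{\psi^{-1}}(\ap_{\psi({-}) \circ f}(\eta_\psi(v)))$. After this rewrite, every path in the goal is built from $\eta_\psi(v)$, $\nat_{\dom}(f,{-})$, and $\eta_\psi$ at various points, and the data $\epsilon_\psi$ no longer appears.

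Next, apply \autoref{homnat} to the homotopy $\nat_{\dom}(f,{-}) : \psi({-}) \circ f \sim \psi({-} \circ L(f))$ along the path $\eta_\psi(v) : \psi^{-1}(\psi(v)) = v$. This gives
\[
\nat_{\dom}(f,\psi^{-1}(\psi(v))) \cdot \ap_{\psi({-})\circ f}(\eta_\psi(v)) = \ap_{\psi({-}\circ L(f))}(\eta_\psi(v)) \cdot \nat_{\dom}(f,v).
\]
Applying $\ap_{\psi^{-1}}$ and its functoriality, the middle and last factors of the top path become $\ap_{\psi^{-1}\circ\psi}(\ap_{{-}\circ L(f)}(\eta_\psi(v))) \cdot \ap_{\psi^{-1}}(\nat_{\dom}(f,v))$. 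The right-hand side of the goal also ends with $\ap_{\psi^{-1}}(\nat_{\dom}(f,v))$, so that factor cancels after whiskering.

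Finally, apply \autoref{homnat} once more, now to the homotopy $\eta_\psi : \psi^{-1}\circ\psi \sim \idd$ along the path $\ap_{{-}\circ L(f)}(\eta_\psi(v))$. This yields
\[
\eta_\psi(\psi^{-1}(\psi(v))\circ L(f))^{-1} \cdot \ap_{\psi^{-1}\circ\psi}(\ap_{{-}\circ L(f)}(\eta_\psi(v))) = \ap_{{-}\circ L(f)}(\eta_\psi(v)) \cdot \eta_\psi(v\circ L(f))^{-1},
\]
after using $\ap_{\idd} = \idd$. This is exactly the remaining square after the rearrangement above, so the exchange square commutes.

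I expect the main obstacle to be the bookkeeping of path inversions, whiskerings, and the $\ap$-composition coherences $\ap_g \circ \ap_f = \ap_{g\circ f}$. If these become unwieldy, a cleaner route is to generalize. For arbitrary $u$ and $p : u = v$, state the analogous square with $p$ in place of $\eta_\psi(v)$ in the naturality-of-$\nat_{\dom}$ part, prove it by path induction on $p$ (where it becomes trivial up to unit laws), and then specialize to $p \coloneqq \eta_\psi(v)$. The only genuinely non-formal ingredient is the triangle identity, used once at the start.
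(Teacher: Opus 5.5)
Your proposal is correct and is essentially the paper's proof. Both arguments use exactly the same three ingredients: homotopy naturality of $\nat_{\dom}(f,{-})$ along $\eta_{\psi}(v)$, homotopy naturality of $\eta_{\psi}$ along $\ap_{{-}\circ L(f)}(\eta_{\psi}(v))$, and the triangle identity. The only difference is that you apply the triangle identity first, whereas the paper applies it last.

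One orientation slip needs fixing; it does not affect the argument. Since $\nat_{\dom}(f,h) : \psi(h)\circ f = \psi(h\circ L(f))$, the middle factor of the unfolded $\widetilde{\nat}_{\dom}(f,\psi(v))$ is $\ap_{\psi^{-1}}(\nat_{\dom}(f,\psi^{-1}(\psi(v))))^{-1}$. The naturality square should read $\nat_{\dom}(f,\psi^{-1}(\psi(v)))\cdot\ap_{\psi({-}\circ L(f))}(\eta_{\psi}(v)) = \ap_{\psi({-})\circ f}(\eta_{\psi}(v))\cdot\nat_{\dom}(f,v)$. Consequently, the factor that cancels against the right edge is $\ap_{\psi^{-1}}(\nat_{\dom}(f,v))^{-1}$.
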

\begin{proof} 
Letting $\mathsf{mid} \coloneqq \ap_{\psi^{-1}}(\nat_{\dom}(f,v))^{-1} \cdot \ap_{\psi^{-1}}(\ap_{{-} \circ f}(\ap_{\psi}(\eta_{\psi}(v))))^{-1}$, define $\mathsf{exch}(f, v)$ as the chain of paths
\[ 
\begin{tikzcd}
	{\widetilde{\nat}_{\dom}(f, \psi(v))} \\
	{\eta_{\psi}(\psi^{-1}(\psi(v)) \circ L(f))^{-1} \cdot \ap_{\psi^{-1}}(\nat_{\dom}(f, \psi^{-1}(\psi(v))))^{-1} \cdot \ap_{\psi^{-1}}(\ap_{{-} \circ f}(\epsilon_{\psi}(\psi(v))))} \\
	{\eta_{\psi}( \psi^{-1}(\psi(v)) \circ L(f))^{-1} \cdot \ap_{\psi }(\ap_{{-} \circ L(f)}(\eta_{\psi}(v))) \cdot \mathsf{mid} \cdot \ap_{\psi^{-1}}(\ap_{{-} \circ f}(\epsilon_{\psi}(\psi(v))))} \\
	{\ap_{{-} \circ L(f)}(\eta_{\psi}(v)) \cdot \eta_{\psi}( v \circ L(f))^{-1} \cdot \mathsf{mid} \cdot \ap_{\psi^{-1}}(\ap_{{-} \circ f}(\epsilon_{\psi}(\psi(v))))} \\
	{\ap_{{-} \circ L(f)}(\eta_{\psi}(v)) \cdot \eta_{\psi}( v \circ L(f))^{-1} \cdot \ap_{\psi^{-1}}(\nat_{\dom}(f,v))^{-1}}
	\arrow["{{{{\textit{by definition}}}}}", equals, nfold, from=1-1, to=2-1]
	\arrow["{{\textit{via homotopy naturality of $\nat_{\dom}(f, {-})$}}}", equals, nfold, from=2-1, to=3-1]
	\arrow["{{\textit{via homotopy naturality of $\eta_{\psi}$}}}", equals, nfold, from=3-1, to=4-1]
	\arrow["{{\textit{via the triangle identity for $\psi$}}}", equals, nfold, from=4-1, to=5-1]
\end{tikzcd}
\]
\end{proof}

\subsection*{Limits}

Since we are studying colimits, we need to discuss cocones under diagrams. In HoTT, we have a concrete description of limits in the wild category of types that offers a useful way of representing cocones in general wild categories.
To avoid an infinite tower of coherence conditions (an unsolved problem in HoTT), we only consider diagrams over \emph{graphs}, which are type-theoretic versions of free categories~\cite[Section 3.2]{CSL25}. Let $\U$ be a universe. A \emph{graph} $\Gamma$ is a pair $\mleft(\Gamma_0, \Gamma_1\mright)$ consisting of a type $\Gamma_0 : \U$ of vertices and a family $\Gamma_1 : \Gamma_0 \to \Gamma_0 \to \U$ of edges.
 Given a wild category $\C$, a \emph{$\Gamma$-shaped diagram $F$ in $\C$} is a pair $\mleft(F_0, F_1\mright)$ consisting of a function $F_0 : \Gamma_0 \to \obb(\C)$ and a family of maps $F_1 : \prod_{i,j : \Gamma_0}\prod_{g : \Gamma_1(i,j)}  \homm_{\C}(F_0(i) , F_0(j))$. We may write $F$ for $F_0$ and $F_1$. A natural transformation between two diagrams is defined similarly to one between two functors.

Let $\Gamma$ be a graph. For every $\U$-valued diagram $F$ over $\Gamma$, the \emph{(standard) limit of $F$}~\cite[Definition 4.2.7]{Avi} is the type $\limm(F)  \coloneqq \sum_{\delta : \prod_{i : \Gamma_0}F_i}\prod_{i,j : \Gamma_0}\prod_{g : \Gamma_1(i,j)} F_{i,j,g}(\delta_i) = \delta_j$. The limit is functorial in $F$. The action on maps sends $\tau : F \Rightarrow G$ to the function $\limm(\tau) : \limm(F) \to \limm(G)$ defined by $\mleft(\delta, D\mright) \mapsto \mleft(\lambda{i}.\tau_0(i, \delta_i), \lambda{i}\lambda{j}\lambda{g}.\tau_1(i,j,g, \delta_i) \cdot \ap_{\tau_0(j)}(D_{i,j,g}) \mright)$.
Let $\C$ be a wild category. For every $\C$-valued diagram $F$ over $\Gamma$ and every $C : \obb(\C)$, we have the diagram $\homm_{\C}(F_j, C) \xrightarrow{{-} \circ F_{i,j,g}} \homm_{\C}(F_i, C)$ over $\Gamma^{\op}$, the opposite graph of $\Gamma$. We define the type of \emph{cocones under $F$ on $C$} as $\limm_{i : \Gamma^{\op}}(\homm_{\C}(F_i , C))$.
\begin{lem} \label{limpreseq}
Let $\Gamma$ be a graph and let $F$ and $G$ be $\Gamma$-shaped diagrams in $\U$. If $\tau : F \Rightarrow G$ is a natural isomorphism, then $\limm(\tau) : \limm(F) \to \limm(G)$ is an equivalence. 
\end{lem}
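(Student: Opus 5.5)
The plan is to decompose $\limm(\tau)$ as a composite of maps on $\Sigma$-types, each of which is visibly an equivalence, and then show that this composite agrees with $\limm(\tau)$ up to homotopy. The alternative route is to prove each step an equivalence directly.

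Write $\limm(F) \equiv \sum_{\delta : \prod_i F_i} P_F(\delta)$, where
\[
P_F(\delta) \coloneqq \prod_{i,j,g} F_{i,j,g}(\delta_i) = \delta_j .
\]
The map $\limm(\tau)$ acts on the first component by $\Phi(\delta) \coloneqq \lambda i.\tau_0(i,\delta_i)$. On the second component it acts by the fiberwise map
\[
\Psi_\delta(D) \coloneqq \lambda i\,j\,g.\ \tau_1(i,j,g,\delta_i) \cdot \ap_{\tau_0(j)}(D_{i,j,g}),
\]
which lands in $P_G(\Phi(\delta))$. We then use the standard fact that a map of $\Sigma$-types $\sum_{a:A}P(a) \to \sum_{b:B}Q(b)$ lying over a base map $\Phi$ is an equivalence whenever $\Phi$ is an equivalence and each fiberwise map $P(a) \to Q(\Phi(a))$ is an equivalence. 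This fact is proved by factoring through $\sum_{a:A} Q(\Phi(a))$: use the total space of fiberwise equivalences first, then reindex the $\Sigma$-type along the equivalence $\Phi$ (Theorem 4.7.7 and Lemma 4.8-style results of the HoTT book).

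There are three steps.

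\textbf{Base map.} Each $\tau_0(i)$ is an equivalence by hypothesis, so $\Phi$ is a $\Pi$-type of equivalences. Its inverse is $\lambda i.\tau_0(i)^{-1}$, and the two homotopies come from the pointwise ones together with function extensionality.

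\textbf{Fiberwise maps.} Fix $\delta$. Then $\Psi_\delta$ is a $\Pi$-type, over $i,j,g$, of the maps $p \mapsto \tau_1(i,j,g,\delta_i) \cdot \ap_{\tau_0(j)}(p)$. Each such map is a composite of two equivalences:
\begin{itemize}
\item $\ap_{\tau_0(j)}$, which is an equivalence because $\tau_0(j)$ is an equivalence (action on paths of an equivalence, HoTT book Theorem 2.11.1);
\item pre-concatenation with the fixed path $\tau_1(i,j,g,\delta_i)$, which is always an equivalence.
\end{itemize}
A $\Pi$-type of equivalences is an equivalence by function extensionality, so $\Psi_\delta$ is an equivalence.

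\textbf{Assembly.} Apply the $\Sigma$-type lemma to conclude that $\limm(\tau)$ is an equivalence. It matches the composite definitionally on pairs $(\delta, D)$.

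The main obstacle is the $\Sigma$-lemma when the base map is nontrivial, since a naive inverse would require transporting along $\epsilon_\Phi$. We avoid constructing that inverse by hand. Instead, we argue via contractible fibers: the fiber of the total map over $(\delta', D')$ decomposes as a $\Sigma$ over $\delta$ in the fiber of $\Phi$ over $\delta'$, which is contractible, together with a fiber of $\Psi_\delta$ modulo transport, which is also contractible. Alternatively, we can use the equivalence-induction principle and assume without loss of generality that $\Phi$ is the identity.
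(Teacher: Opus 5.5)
Your proof is correct. The paper states this lemma without proof and treats it as standard, so there is no written argument to compare against. Your decomposition is the standard one: $\Phi$ is a $\Pi$ of the equivalences $\tau_0(i)$; each $\Psi_\delta$ is a $\Pi$ of composites of $\ap_{\tau_0(j)}$ with pre-concatenation by $\tau_1(i,j,g,\delta_i)$; and you finish with the lemma that a map of $\Sigma$-types over an equivalence, with fiberwise equivalences, is an equivalence. Factoring through $\sum_{\delta}P_G(\Phi(\delta))$ already disposes of your ``main obstacle'', because reindexing a $\Sigma$-type along an equivalence is a standard lemma, provable by exactly the contractible-fiber argument you sketch.

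The one thing to drop is the ``equivalence induction'' alternative. Reducing an arbitrary equivalence $\Phi$ to the identity requires univalence. The paper uses this lemma in the proof of \autoref{LAPC}, and it claims that section is carried out in MLTT with only function extensionality. Your contractible-fiber route stays within that setting.
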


Next, we state the \emph{structure identity principle (SIP)} for $\limm$. The SIP is a general lemma characterizing identity types of $\Sigma$-types~\cite[Theorem 11.6.2]{FTID}. We will need the SIP for limits in order to port the proof of \emph{LAPC}.

\begin{lem}\label{SIPlim}
Let $F$ be a $\Gamma$-shaped diagram in $\U$. Let $e_1 \coloneqq \mleft(\delta_1, D_1\mright), e_2 \coloneqq \mleft(\delta_2, D_2\mright) : \limm(F)$. Then $e_1 = e_2$ is equivalent to the type of $Q : \delta_1 \sim \delta_2$ equipped with a commuting square 
\[\begin{tikzcd}[column sep = large]
	{F_{i,j,g}(\delta_1(i))} & {\delta_1(j)} \\
	{F_{i,j,g}(\delta_2(i))} & {\delta_2(j)}
	\arrow["{D_1(i,j,g)}", equals, from=1-1, to=1-2]
	\arrow["{\ap_{F_{i,j,g}}(Q_i)}"', equals, from=1-1, to=2-1]
	\arrow["{Q_j}", equals, from=1-2, to=2-2]
	\arrow["{D_2(i,j,g)}"', equals, from=2-1, to=2-2]
\end{tikzcd}\]
for each edge $g : \Gamma_1(i,j)$.
\end{lem}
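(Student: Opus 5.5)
The statement is the structure identity principle for $\limm(F)$ (\autoref{SIPlim}). The plan is to prove it by a chain of standard equivalences rather than by manipulating paths directly. Since $\limm(F)$ is a $\Sigma$-type, the characterization of paths in $\Sigma$-types gives
\[
(e_1 = e_2) \simeq \sum_{p : \delta_1 = \delta_2} \transport^{P}(p, D_1) = D_2,
\]
where $P(\delta) \coloneqq \prod_{i,j}\prod_{g} F_{i,j,g}(\delta_i) = \delta_j$. By function extensionality, $(\delta_1 = \delta_2) \simeq (\delta_1 \sim \delta_2)$ via $\mathsf{happly}$. We reindex the $\Sigma$-type along this equivalence, writing $p$ as $\mathsf{funext}(Q)$ for $Q : \delta_1 \sim \delta_2$.

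The remaining work is to identify $\transport^P(\mathsf{funext}(Q), D_1) = D_2$ with the family of commuting squares in the statement. The cleanest route avoids computing transport along $\mathsf{funext}$. Instead, we use the fundamental theorem of identity types (equivalently, the SIP of \cite[Theorem 11.6.2]{FTID}): it suffices to show that
\[
\sum_{(\delta_2, D_2)} \sum_{Q : \delta_1 \sim \delta_2} \prod_{i,j,g} D_1(i,j,g) \cdot Q_j = \ap_{F_{i,j,g}}(Q_i) \cdot D_2(i,j,g)
\]
is contractible, and that the family is inhabited at $(\delta_1, D_1)$ by $Q \coloneqq \lambda i.\refl$ with the evident squares built from unit laws. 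For contractibility, we reassociate the $\Sigma$-type to pair $\delta_2$ with $Q$. The type $\sum_{\delta_2} \delta_1 \sim \delta_2$ is contractible, being equivalent via function extensionality to a based path space, with center $(\delta_1, \lambda i.\refl)$. Contracting it leaves
\[
\sum_{D_2} \prod_{i,j,g} D_1(i,j,g) \cdot \refl = \refl \cdot D_2(i,j,g).
\]
After removing the unit laws, this is equivalent to $\sum_{D_2} \prod_{i,j,g} D_1(i,j,g) = D_2(i,j,g)$. By function extensionality, that is $\sum_{D_2} D_1 = D_2$, which is contractible.

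The only mildly delicate step is the contraction of $\sum_{\delta_2} \delta_1 \sim \delta_2$: we must make sure the square type transports correctly under it, i.e., that setting $Q$ to $\refl$ makes $\ap_{F_{i,j,g}}(Q_i)$ reduce to $\refl$ and $Q_j$ to $\refl$. The cleanest way to guarantee this is to do induction on $Q$ after converting it to a path via function extensionality. That is, we prove the statement for $\delta_2 \equiv \delta_1$ and $Q \equiv \mathsf{happly}(\refl)$, where both sides reduce definitionally. Everything else is routine $\Sigma$- and $\Pi$-type bookkeeping, so I expect no real obstacle beyond this.
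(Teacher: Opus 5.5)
Your proposal is correct and follows essentially the paper's route: the paper gets this lemma directly from the structure identity principle \cite[Theorem 11.6.2]{FTID}, and your argument (contract $\sum_{\delta_2}\delta_1\sim\delta_2$ to $(\delta_1,\lambda i.\refl)$, then contract the based path space in $D_2$) is that principle's proof specialized to $\limm(F)$. The step you flag as delicate is automatic: contracting the base of a $\Sigma$-type leaves exactly the fiber over the center, where $Q_j\equiv\refl$ and $\ap_{F_{i,j,g}}(\refl)\equiv\refl$ hold definitionally, so no transport or extra induction on $Q$ is needed.
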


\section{Porting the proof of \emph{LAPC}} \label{2Adj}

This section is the core of the paper. We find a sufficient, practically useful condition for the standard proof of \emph{LAPC} to work for wild categories. Informally, the condition states that $\nat_{\dom}$ interacts nicely with the composition law of the left adjoint.

Let $\C$ be a wild category. Let $\Gamma$ be a graph and $F : \Gamma \to \C$ be a $\C$-valued diagram over $\Gamma$. Consider a cocone $\K \coloneqq \mleft(C, r, K\mright)$ under $F$, where $r_i : F_i \to C$ for each $i : \Gamma_0$ and $K_{i,j,g} :  r_j \circ F_{i,j,g}  = r_i$ for all $i, j : \Gamma_0$ and $g : \Gamma_1(i,j)$.
\begin{defi}[{\cite[\myhref{https://github.com/PHart3/colimits-agda/blob/lapc-lmcs/HoTT-Agda/core/lib/wild-cats/Colim-wc.agda\#L22}{is-colim}]{2cohagda-pub}}] \label{colimwildc}
We say that $\K$ is \emph{colimiting} if for every $X : \obb(\C)$, the following post-composition map is an equivalence:
\begin{align*}
& \postcomp_{\K}(X) \ : \ \homm_{\C}(C, X) \to \limm_{i : \Gamma^{\op}}(\homm_{\C}(F_i, X))
\\ & \postcomp_{\K}(X,f) \ \coloneqq \ \mleft(\lambda{i}.f \circ r_i, \lambda{j}\lambda{i}\lambda{g}.\assoc(f, r_j, F_{i,j,g}) \cdot \ap_{f \circ {-}}(K_{i,j,g})  \mright)
\end{align*}
\end{defi}
\autoref{colimwildc} expresses that for every cocone $\K'$ under $F$, there is a unique cocone morphism $\K \to \K'$. Let $\D$ be a wild category and $L : \C \to \D$ be a functor. We have an induced diagram $L(F)$ and an induced cocone $L(\K)$ under $L(F)$:
\[\begin{tikzcd}[row sep = 25]
	{L(F_i)} && {L(F_j)} \\
	& {L(C)}
	\arrow[""{name=0, anchor=center, inner sep=0}, "{L(F_{i,j,g})}", from=1-1, to=1-3]
	\arrow["{L(r_i)}"', from=1-1, to=2-2]
	\arrow["{L(r_j)}", from=1-3, to=2-2]
	\arrow["{L(K_{i,j,g})}"{description}, draw=none, from=0, to=2-2]
\end{tikzcd}
\tag{$L(K_{i,j,g}) \coloneqq L_{\circ}(r_j, F_{i,j,g}) \cdot \ap_L(K_{i,j,g})$}\] 

Suppose that $\K$ is colimiting and that we have an adjunction $\mleft(\psi, \nat_{\cod}, \nat_{\dom}\mright) : L \dashv R$. We wish to replay the standard classical proof by showing the chain of isomorphisms \eqref{eq:iso} equals the post-composition map. Let us preview this argument.
Let $\zeta$ denote the composite of the isomorphisms. By function extensionality, it suffices to show $\zeta$ and post-composition are equal on $h$ for every $h : \homm_{\D}(L(C), Y)$. For each $i : \Gamma_0$, we can build an equality $Q_i : \psi^{-1}(\psi(h) \circ r_i) = h \circ L(r_i)$ from $\nat_{\dom}$ and the equivalence data for $\psi$. By the SIP for $\limm$ (\autoref{SIPlim}), it suffices to fill the following pentagon for all $i,j : \Gamma_0$ and $g : \Gamma_1(i,j)$:
\[\begin{tikzcd}
	& {\psi^{-1}(\psi(h) \circ r_i)} & \\
	{\psi^{-1}(\psi(h) \circ r_j) \circ L(F_{i,j,g})} && {h \circ L(r_i)} \\
	{\mleft(h \circ L(r_j)\mright) \circ L(F_{i,j,g})} && {h \circ \mleft(L(r_j) \circ L(F_{i,j,g})\mright)}
	\arrow["{Q_i}", equals, nfold, from=1-2, to=2-3]
	\arrow["{\pr_2(\zeta(h))(j ,i,g)}", equals, nfold, from=2-1, to=1-2]
	\arrow["{\ap_{{-} \circ L(F_{i,j,g})}(Q_j)}"', equals, nfold, from=2-1, to=3-1]
	\arrow["{\ap_{h \circ {-}}(L(K_{i,j,g}))}", equals, nfold, from=2-3, to=3-3]
	\arrow["{\assoc(h, L(r_j), L(F_{i,j,g}))}"', equals, nfold, from=3-1, to=3-3]
\end{tikzcd}\]
If $\D$'s hom types happened to be sets, like those of $1$-categories, then the equality filling it would hold trivially. The problem is that it need not hold for general $\D$, and we offer an example of a wild adjunction for which it is provably false inside HoTT (\autoref{eq-counter}). Still, by reverse engineering the equality, we arrive at the following general property of an adjunction as a sufficient condition for it to hold.

\begin{defi}[{\cite[\myhref{https://github.com/PHart3/colimits-agda/blob/lapc-lmcs/HoTT-Agda/core/lib/wild-cats/Ladj-2-coher.agda\#L16}{ladj-is-2coher}]{2cohagda-pub}}] \label{2coherdef}
The left adjoint $L$ is \emph{$2$-coherent} if for all $h_1 : \homm_{\D}(L(X), Y)$, $h_2 : \homm_{\C}(Z, X)$, and $h_3 : \homm_{\C}(W, Z)$, the following diagram commutes:
  \[
    \label{2coher}
    \begin{tikzcd}[column sep = huge, row sep = 25]
      {\mleft( \psi(h_1) \circ h_2\mright) \circ h_3} && {\psi(h_1) \circ \mleft(h_2 \circ h_3\mright)} \\
      {\psi(h_1 \circ L(h_2)) \circ h_3} && {\psi(h_1 \circ L(h_2 \circ h_3))} \\
      {\psi(\mleft(h_1 \circ L(h_2)\mright) \circ L(h_3))} && {\psi(h_1 \circ \mleft(L(h_2) \circ L(h_3) \mright))}
      \arrow["{{\assoc(\psi(h_1), h_2, h_3)}}", Rightarrow, no head, from=1-1, to=1-3]
      \arrow["{{\nat_{\dom}(h_2 \circ h_3, h_1)}}", Rightarrow, no head, from=1-3, to=2-3]
      \arrow["{{\ap_{{-} \circ h_3}(\nat_{\dom}(h_2,h_1))}}", Rightarrow, no head, from=2-1, to=1-1]
      \arrow["{{\ap_{\psi}(\ap_{h_1 \circ {-}}(L_{\circ}(h_2,h_3)))}}", Rightarrow, no head, from=2-3, to=3-3]
      \arrow["{{\nat_{\dom}(h_3,h_1 \circ L(h_2))}}", Rightarrow, no head, from=3-1, to=2-1]
      \arrow["{{\ap_{\psi}(\assoc(h_1, L(h_2), L(h_3)))}}"', Rightarrow, no head, from=3-1, to=3-3]
    \end{tikzcd} \tag{$\texttt{$2$-coh}$} \]
\end{defi}

\begin{note}
In terms of a classical biadjunction~\cite[Section 6]{kellybiadj}, the condition \eqref{2coher} is part of the pseudonaturality of the transformation $\mleft(\psi ,\nat_{\dom} \mright) : \homm_{\D}(L({-}), Y) \Rightarrow \homm_{\C}({-}, R(Y))$, expressing its compatibility with the relevant composition laws.
\end{note}

Intuitively, \autoref{2coherdef} accounts for the $2$-dimensional datum introduced by $L(K_{i,j,g})$. This is necessary as the definition of adjunction only accounts for $0$- and $1$-dimensional data. By accounting for the relevant $2$-dimensional data, we can finish porting the standard proof to wild category theory as follows.
\begin{thm}[{\cite[\myhref{https://github.com/PHart3/colimits-agda/blob/lapc-lmcs/HoTT-Agda/core/lib/wild-cats/Ladj-colim.agda}{Ladj-colim}]{2cohagda-pub}}] \label{LAPC}
If $L$ is $2$-coherent, the cocone $L(\K)$ under $L(F)$ is colimiting. 
\end{thm}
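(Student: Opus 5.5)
We follow the standard proof sketched before \autoref{2coherdef}. Fix $Y : \obb(\D)$. We must show that $\postcomp_{L(\K)}(Y)$ is an equivalence. We build the composite equivalence
\[
\zeta \;:\; \homm_{\D}(L(C),Y) \xrightarrow{\psi} \homm_{\C}(C,R(Y)) \xrightarrow{\postcomp_{\K}(R(Y))} \limm_{i}(\homm_{\C}(F_i,R(Y))) \xrightarrow{\limm(\psi^{-1},\widetilde{\nat}_{\dom})} \limm_{i}(\homm_{\D}(L(F_i),Y)).
\]
The first map is an equivalence by hypothesis on $\psi$. The second is an equivalence because $\K$ is colimiting. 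The third is an equivalence by \autoref{limpreseq}, since $(\psi^{-1},\widetilde{\nat}_{\dom})$ is a natural isomorphism of $\Gamma^{\op}$-diagrams. It then suffices to give a homotopy $\zeta \sim \postcomp_{L(\K)}(Y)$, since being an equivalence transports along homotopies. Fix $h : \homm_{\D}(L(C),Y)$ and apply \autoref{SIPlim}.

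For the pointwise component, set $Q_i : \psi^{-1}(\psi(h)\circ r_i) = h\circ L(r_i)$ to be $\ap_{\psi^{-1}}(\nat_{\dom}(r_i,h)) \cdot \eta_\psi(h\circ L(r_i))$. It remains to fill the pentagon displayed before \autoref{2coherdef} for each edge $g$. Unfolding the definitions, its top-left path is
\[
\widetilde{\nat}_{\dom}(F_{i,j,g},\psi(h)\circ r_j) \cdot \ap_{\psi^{-1}}\bigl(\assoc(\psi(h),r_j,F_{i,j,g}) \cdot \ap_{\psi(h)\circ-}(K_{i,j,g})\bigr).
\]
The right-hand side is $\ap_{h\circ-}(L_\circ(r_j,F_{i,j,g}) \cdot \ap_L(K_{i,j,g}))$ together with $\assoc$.

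The first key step is to eliminate $\widetilde{\nat}_{\dom}$ in favour of $\nat_{\dom}$ and the equivalence data. We use the exchange law \autoref{exchadj} with $v \coloneqq \psi^{-1}(\psi(h)\circ r_j)$ after rewriting $\ap_{-\circ L(F)}(Q_j)$. More precisely, we first use homotopy naturality (\autoref{homnat}) of $\eta_\psi$ to move all $\eta_\psi$ terms to the boundary, where they cancel. After that, every remaining path is of the form $\ap_{\psi^{-1}}(\cdots)$. Applying $\ap_\psi$ is an equivalence on path types, and the triangle identity $\tau_\psi$ trades $\ap_\psi(\eta_\psi)$ for $\epsilon_\psi$. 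Together these reduce the goal to an identity in $\homm_{\C}(F_i,R(Y))$ involving only $\nat_{\dom}$, $\assoc$, $L_\circ$, and $K$.

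Second, we eliminate $K_{i,j,g}$. Both sides depend on $K_{i,j,g}$ only through $\ap_{\psi(h)\circ-}(K)$, respectively $\ap_{\psi}(\ap_{h\circ-}(\ap_L(K)))$. Homotopy naturality of $\nat_{\dom}({-},h)$ along $K_{i,j,g}$ commutes these paths past each other. Alternatively, we can generalize $r_i$ and $K_{i,j,g}$ and do path induction on $K$, replacing $r_i$ by $r_j\circ F_{i,j,g}$ and $K$ by $\refl$. What remains is exactly the hexagon \eqref{2coher} with $h_1\coloneqq h$, $h_2\coloneqq r_j$, $h_3\coloneqq F_{i,j,g}$, up to the $\ap_\psi$ of $\assoc$ appearing on the $\D$-side. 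We then close the proof by $2$-coherence of $L$.

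The main obstacle is the first step: the bookkeeping of $\eta_\psi$, $\epsilon_\psi$ and the triangle identity so that the equivalence data cancel completely rather than surviving into the final condition. We plan to do this by repeatedly using naturality of $\eta_\psi$ and $\epsilon_\psi$ and a single use of $\tau_\psi$, exactly as in the proof of \autoref{exchadj}. Once the equivalence data are gone, the reduction to \eqref{2coher} is a matter of matching path concatenations.
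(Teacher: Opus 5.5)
Your proposal is correct and follows the paper's proof essentially step for step: the same composite $\zeta$ and paths $Q_i$, \autoref{SIPlim}, elimination of $\widetilde{\nat}_{\dom}$ and $K_{i,j,g}$ via homotopy naturality and the exchange law, stripping the $\ap_{\psi^{-1}}$ via injectivity of $\ap$ on the equivalence $\psi^{-1}$ after using $\eta_\psi$-naturality, and closing with \eqref{2coher} at $(h, r_j, F_{i,j,g})$. One slip: \autoref{exchadj} does not apply literally at $v \coloneqq \psi^{-1}(\psi(h)\circ r_j)$, because $\psi(v)$ is not $\psi(h)\circ r_j$; the paper instead first uses homotopy naturality of $\widetilde{\nat}_{\dom}(F_{i,j,g},{-})$ along $\nat_{\dom}(r_j,h)$ and then applies the exchange law at $v \coloneqq h\circ L(r_j)$, although your fallback also works (unfold $\widetilde{\nat}_{\dom}$, push along $Q_j$ by naturality of $\eta_\psi$, $\epsilon_\psi$ and $\nat_{\dom}(F_{i,j,g},{-})$, with a single use of $\tau_\psi$).
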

\begin{proof}
For all $i : \Gamma_0$ and $X : \obb(\D)$, we have
\begin{align*}
& \ \homm_{\D}(L(C), X)
\\ \simeq & \ \homm_{\C}(C, R(X)) \tag{$\textit{hom-isomorphism}$}
\\ \simeq & \ \limm_{i : \Gamma^{\op}}(\homm_{\C}(F_i, R(X))) \tag{$\textit{$\K$ is colimiting}$}
\\ \simeq & \ \limm_{i : \Gamma^{\op}}(\homm_{\D}(L(F_i), X)) \tag{$\textit{\autoref{limpreseq} applied to $\mleft(\psi^{-1} , \widetilde{\nat}_{\dom} \mright) \circ F$}$}
\end{align*}
Let $\zeta$ denote the composite of these three equivalences: $\zeta$ sends $h : \homm_{\D}(L(C), X)$ to the cocone $\mleft( \lambda{i}.\psi^{-1}(\psi(h) \circ r_i) , \textsf{comp-tri} \mright)$ where $\textsf{comp-tri}(j,i,g)$ denotes the chain of paths 
\[\begin{tikzcd}[column sep = 54]
	{\psi^{-1}(\psi(h) \circ r_j) \circ L(F_{i,j,g})} & {\psi^{-1}(\mleft(\psi(h) \circ r_j\mright) \circ F_{i,j,g})} & {\psi^{-1}(\psi(h) \circ r_i)}
	\arrow["{\widetilde{\nat}_{\dom}(F_{i,j,g}, \psi(h) \circ r_j)}", shift left, equals, nfold, from=1-1, to=1-2]
	\arrow["{\ap_{\psi^{-1}}(\assoc(\psi(h), r_j, F_{i,j,g}) \cdot \ap_{\psi(h) \circ {-}}(K_{i,j,g}) )}", shift left, equals, nfold, from=1-2, to=1-3]
\end{tikzcd}\]
We want to show $\zeta(h) = \postcomp_{L(\K)}(X, h)$. For each $i : \Gamma_0$, we have the chain $Q_i$ of paths
\[\begin{tikzcd}[column sep = 70]
	{\psi^{-1}(\psi(h) \circ r_i)} & {\psi^{-1}(\psi( h \circ  L(r_i)))} & {h \circ L(r_i)}
	\arrow["{\ap_{\psi^{-1}}(\nat_{\dom}(r_i, h))}", equals, from=1-1, to=1-2]
	\arrow["{\eta_{\psi}(h \circ L(r_i))}", equals, from=1-2, to=1-3]
\end{tikzcd}\]
Let $i, j : \Gamma_0$ and $g : \Gamma_1(i,j)$. By \autoref{SIPlim}, it suffices to show that
\[
\label{deseql}
\begin{tikzcd}
	{\widetilde{\nat}_{\dom}(F_{i,j,g}, \psi(h) \circ r_j) \cdot \ap_{\psi^{-1}}(\assoc(\psi(h), r_j, F_{i,j,g}) \cdot \ap_{\psi(h) \circ {-}}(K_{i,j,g}) ) \cdot Q_i} \\
	{\ap_{{-} \circ L(F_{i,j,g})}(Q_j) \cdot \assoc(h, L(r_j), L(F_{i,j,g})) \cdot \ap_{h \circ {-}}(L(K_{i,j,g}))}
	\arrow[Rightarrow, no head, from=1-1, to=2-1]
\end{tikzcd}
\tag{$\texttt{eq-edge}$} \]
We start with the top endpoint of \eqref{deseql}. By \autoref{exchadj}, we have the commuting diagram
\[ 
\begin{tikzcd}[column sep = 35, row sep = 30]
	{\psi^{-1}(\psi(h) \circ r_j) \circ L(F_{i,j,g})} && {\psi^{-1}(\mleft(\psi(h) \circ r_j\mright) \circ F_{i,j,g})} \\
	{\psi^{-1}(\psi(h \circ L(r_j))) \circ L(F_{i,j,g})} & \bullet & {\psi^{-1}(\psi(h \circ L(r_j)) \circ F_{i,j,g})} \\
	{\mleft(h \circ L(r_j)\mright) \circ L(F_{i,j,g})} && {\psi^{-1}(\psi(\mleft(h \circ L(r_j)\mright) \circ L(F_{i,j,g})))}
	\arrow[""{name=0, anchor=center, inner sep=0}, "{{\widetilde{\nat}_{\dom}(F_{i,j,g}, \psi(h) \circ r_j)}}", Rightarrow, no head, from=1-1, to=1-3]
	\arrow[Rightarrow, no head, from=1-1, to=2-1]
	\arrow[Rightarrow, no head, from=1-3, to=2-3]
	\arrow[""{name=1, anchor=center, inner sep=0}, "{{\widetilde{\nat}_{\dom}(F_{i,j,g}, \psi(h \circ L(r_j)))}}"', Rightarrow, no head, from=2-1, to=2-3]
	\arrow["{\textit{via $\eta_{\psi}(h \circ L(r_j))$}}"', Rightarrow, no head, from=2-1, to=3-1]
	\arrow["{\textit{via $\nat_{\dom}(F_{i,j,g},h \circ L(r_j))$}}", Rightarrow, no head, from=2-3, to=3-3]
	\arrow[""{name=2, anchor=center, inner sep=0}, "{{\eta_{\psi}( \mleft(h \circ L(r_j)\mright) \circ L(F_{i,j,g}))}}"', Rightarrow, no head, from=3-1, to=3-3]
	\arrow["{{\textit{homotopy naturality on $\nat_{\dom}(r_j,h)$}}}"{description}, draw=none, from=0, to=1]
	\arrow["{{\mathsf{exch}(F_{i,j,g}, h \circ L(r_j))}}"{description}, shift left=3, draw=none, from=2-2, to=2]
\end{tikzcd}\]
We also have the commuting diagram
\[\begin{tikzcd}[column sep = 52, row sep = 30]
	{\psi^{-1}(\psi(h) \circ \mleft(r_j \circ F_{i,j,g}\mright))} && {\psi^{-1}(\psi(h) \circ r_i)} \\
	{\psi^{-1}(\psi(h \circ L(r_j \circ F_{i,j,g})))} && {\psi^{-1}(\psi(h \circ L(r_i)))} \\
	{h \circ L(r_j \circ F_{i,j,g})} && {h \circ L(r_i)}
	\arrow[""{name=0, anchor=center, inner sep=0}, "{\ap_{\psi^{-1}}(\ap_{\psi(h) \circ {-}}(K_{i,j,g}))}", Rightarrow, no head, from=1-1, to=1-3]
	\arrow["{\ap_{\psi^{-1}}(\nat_{\dom}(r_j \circ F_{i,j,g}, h))}"', Rightarrow, no head, from=1-1, to=2-1]
	\arrow["{\ap_{\psi^{-1}}(\nat_{\dom}(r_i, h))}", Rightarrow, no head, from=1-3, to=2-3]
	\arrow[""{name=1, anchor=center, inner sep=0}, "{\ap_{\psi^{-1}}(\ap_{\psi}(\ap_{h \circ L({-})}(K_{i,j,g})))}"', Rightarrow, no head, from=2-1, to=2-3]
	\arrow["{\eta_{\psi}(h \circ L(r_j \circ F_{i,j,g}))}"', Rightarrow, no head, from=2-1, to=3-1]
	\arrow["{\eta_{\psi}(h \circ L(r_i))}", Rightarrow, no head, from=2-3, to=3-3]
	\arrow[""{name=2, anchor=center, inner sep=0}, "{\ap_{h \circ {-}}(\ap_L(K_{i,j,g}))}"', Rightarrow, no head, from=3-1, to=3-3]
	\arrow["{\textit{homotopy naturality}}"{description}, draw=none, from=0, to=1]
	\arrow["{\textit{homotopy naturality}}"{description}, draw=none, from=1, to=2]
\end{tikzcd}\]
By rewriting \eqref{deseql} with these two commuting diagrams, we turn it into the equality expressing that the following diagram commutes:
\[ \label{eq:untract}
\begin{tikzcd}[column sep = 34]
	{\psi^{-1}(\mleft( \psi(h) \circ r_j\mright) \circ F_{i,j,g})} && {\psi^{-1}(\psi(h) \circ \mleft(r_j \circ F_{i,j,g}\mright))} \\
	{\psi^{-1}(\psi(h \circ L(r_j)) \circ F_{i,j,g})} && {\psi^{-1}(\psi(h \circ L(r_j \circ F_{i,j,g})))} \\
	{\psi^{-1}(\psi(\mleft(h \circ L(r_j)\mright) \circ L(F_{i,j,g})))} && {h \circ L(r_j \circ F_{i,j,g})} \\
	{\mleft(h \circ L(r_j)\mright) \circ L(F_{i,j,g})} && {h \circ \mleft(L(r_j) \circ L(F_{i,j,g})\mright)}
	\arrow["{\ap_{\psi^{-1}}(\assoc(\psi(h), r_j, F_{i,j,g}))}", Rightarrow, shift left, no head, from=1-1, to=1-3]
	\arrow["{\ap_{\psi^{-1}}(\nat_{\dom}(r_j \circ F_{i,j,g}, h))}", Rightarrow, no head, from=1-3, to=2-3]
	\arrow["{\ap_{\psi^{-1}({-} \circ F_{i,j,g})}(\nat_{\dom}(r_j,h))}", Rightarrow, no head, from=2-1, to=1-1]
	\arrow["{\eta_{\psi}(h \circ L(r_j \circ F_{i,j,g}))}", Rightarrow, no head, from=2-3, to=3-3]
	\arrow["{\ap_{\psi^{-1}}(\nat_{\dom}(F_{i,j,g},h \circ L(r_j)))}", Rightarrow, no head, from=3-1, to=2-1]
	\arrow["{\eta_{\psi}( \mleft(h \circ L(r_j)\mright) \circ L(F_{i,j,g}))}"', Rightarrow, no head, from=3-1, to=4-1]
	\arrow["{\ap_{h \circ {-}}(L_{\circ}(r_j,F_{i,j,g}))}", Rightarrow, no head, from=3-3, to=4-3]
	\arrow["{\assoc(h, L(r_j), L(F_{i,j,g})) }"', Rightarrow, no head, from=4-1, to=4-3]
\end{tikzcd} \tag{$\dagger$} \]
At this point, we could have defined \eqref{2coher} so that it aligns with \eqref{eq:untract}, but this equality is difficult to check in practice. Hence we now transform \eqref{eq:untract} into something more tractable. Since $\psi$ is an equivalence (hence an embedding), the equality \eqref{eq:untract} is equivalent to its image under $\ap_{\psi}$. By homotopy naturality of $\eta_{\psi}$, this image is equivalent to an equality expressing that the following diagram commutes:
\[\begin{tikzcd}[column sep = 50]
	{\mleft( \psi(h) \circ r_j\mright) \circ F_{i,j,g}} && {\psi(h) \circ \mleft(r_j \circ F_{i,j,g}\mright)} \\
	{\psi(h \circ L(r_j)) \circ F_{i,j,g}} && {\psi(h \circ L(r_j \circ F_{i,j,g}))} \\
	{\psi(\mleft(h \circ L(r_j)\mright) \circ L(F_{i,j,g}))} && {\psi(h \circ \mleft(L(r_j) \circ L(F_{i,j,g}) \mright))}
	\arrow["{\assoc(\psi(h), r_j, F_{i,j,g})}", Rightarrow, no head, from=1-1, to=1-3]
	\arrow["{\nat_{\dom}(r_j \circ F_{i,j,g}, h)}", Rightarrow, no head, from=1-3, to=2-3]
	\arrow["{\ap_{{-} \circ F_{i,j,g}}(\nat_{\dom}(r_j,h))}", Rightarrow, no head, from=2-1, to=1-1]
	\arrow["{\ap_{\psi}(\ap_{h \circ {-}}(L_{\circ}(r_j,F_{i,j,g})))}", Rightarrow, no head, from=2-3, to=3-3]
	\arrow["{\nat_{\dom}(F_{i,j,g},h \circ L(r_j))}", Rightarrow, no head, from=3-1, to=2-1]
	\arrow["{\ap_{\psi}(\assoc(h, L(r_j), L(F_{i,j,g})))}"', Rightarrow, no head, from=3-1, to=3-3]
\end{tikzcd}\] Finally, this diagram commutes because $L$ is $2$-coherent.
\end{proof}

\begin{exa} \label{eq-counter}
As claimed, without the $2$-coherence assumption, \eqref{deseql} may not hold. 

Define the wild category $\E$ by $\obb(\E) \coloneqq \1$ and $\hom_{\E}(\ast, \ast) \coloneqq S^1$. Here, $S^1$ denotes the circle, defined as the HIT generated by 
a point $\base : S^1$ and a path $\loopp : \base = \base$. The remaining structure on $\E$, including its composition operation $\bullet$, comes from path concatenation on a loop space (where the loop space $\Omega(X,x_0)$ of a pointed type is the pointed type $\mleft(x_0=x_0, \refl_{x_0}\mright)$). Indeed, $S^1$ is the loop space of the Eilenberg-MacLane space $K(\Z, 2)$~\cite{EMS}. For all $\ell : \hom_{\E}(\ast, \ast)$, we have a nontrivial loop $\L_{\ell}$ at $\ell$: It is known that $\loopp$ is nontrivial~\cite[Lemma 6.4.1]{Uni13}, and we have an equivalence $f : \mleft(S^1, \base \mright) \to \mleft(S^1 , \ell\mright)$ of pointed types defined by $f(x) \coloneqq x \bullet \ell$. (As we'll see in \cref{SuspCol}, this equivalence illustrates how loop spaces are \emph{homogeneous}, i.e., independent of basepoint in this sense.) The induced map $\Omega(f) : \Omega(S^1, \base) \to_{\ast} \Omega(S^1,\ell) $ on loop spaces is also an equivalence because $\Omega$ is functorial, and we define $\L_{\ell}$ as the image of $\loopp$ under $\Omega(f)$.
Let the functor $\Lambda : \E \to \E$ be the identity on objects and morphisms, but let $\Lambda_{\circ}(\ell_1, \ell_2) \coloneqq \L_{\ell_1 \bullet \ell_2}$.
Consider the evident adjunction $\Lambda \dashv \Lambda$. If $h \equiv \idd_{\ast}$, then \eqref{deseql}, with respect to this adjunction, reduces to showing that $\Lambda_{\circ} (h_2, h_3)$ is trivial. But it's nontrivial by construction.
\end{exa}

Although \autoref{eq-counter} shows that the standard proof fails, it does not show that wild left adjoints fail to preserve colimits. Indeed, we now prove that $\Lambda$ preserves colimits. For every $\E$-valued diagram $F$ over a graph $\Gamma$, the type of cocones under $F$ is equivalent to  $\mathcal{T}_F \coloneqq \sum_{h : \Gamma_0 \to S^1}\prod_{i,j,g}F_{i,j,g} \bullet h_j = h_i$. Let $\K \coloneqq \mleft(h,K\mright)$ be a cocone in this form and consider its image under $\Lambda$: $\mleft(h, \lambda{j}\lambda{i}\lambda{g}.\L_{F_{i,j,g} \bullet h_j} \cdot K_{i,j,g} \mright)$. It suffices to show that the following triangle commutes where $e$ is defined by $e(f, M) \coloneqq \mleft(f,\lambda{j}\lambda{i}\lambda{g}.\L_{F_{i,j,g} \bullet f_j} \cdot M_{i,j,g} \mright)$ with inverse sending $\mleft(f,M\mright)$ to $\mleft(f,\lambda{j}\lambda{i}\lambda{g}.\L_{F_{i,j,g} \bullet f_j}^{-1} \cdot M_{i,j,g}  \mright)$:
\[
\begin{tikzcd}
	& {S^1} & \\
	{\mathcal{T}_F} && {\mathcal{T}_F}
	\arrow["{\postcomp_{\K}}"', from=1-2, to=2-1]
	\arrow["{\postcomp_{\Lambda(\K)}}", from=1-2, to=2-3]
	\arrow["e"', from=2-1, to=2-3]
	\arrow["\simeq", from=2-1, to=2-3]
\end{tikzcd}
\] This triangle commutes as follows: For each $x : S^1$,
\begin{align*}
& \ e(\postcomp_{\K}(x)) 
\\  \equiv & \   \mleft(h \bullet x, \lambda{j}\lambda{i}\lambda{g}.\L_{F_{i,j,g}\bullet \mleft(h_j\bullet x\mright)} \cdot \assoc(x,h_j, F_{i,j,g}) \cdot \ap_{{-} \bullet x}(K_{i,j,g})\mright)
\\ = & \ \mleft(h \bullet x, \lambda{j}\lambda{i}\lambda{g}. \assoc(x,h_j, F_{i,j,g}) \cdot \L_{\mleft(F_{i,j,g}\bullet h_j\mright)\bullet x} \cdot \ap_{{-} \bullet x}(K_{i,j,g})\mright) \tag{$\textit{homotopy naturality}$}
\\ = & \ \mleft(h \bullet x, \lambda{j}\lambda{i}\lambda{g}.\assoc(x,h_j, F_{i,j,g}) \cdot \ap_{{-} \bullet x}(\L_{F_{i,j,g} \bullet h_j} \cdot K_{i,j,g})\mright)
\\ \equiv & \ \postcomp_{\Lambda(\K)}(x) 
\end{align*}
The final identity follows from the fact that $\L_{y \bullet x} = \ap_{{-}\bullet x}(\L_y)$ for all $y : S^1$, which holds because $\L_{y} \equiv \ap_{{-} \bullet y}(\loopp)$.

Thus, \autoref{eq-counter} tells us that $2$-coherence is sufficient but not necessary for preservation of colimits. Importantly, we still lack an example of a wild left adjoint that fails to preserve colimits.

\begin{exa}[{\cite[\myhref{https://github.com/PHart3/colimits-agda/blob/lapc-lmcs/HoTT-Agda/core/lib/wild-cats/LAPC-counterexmp.agda}{LAPC-counterexmp}]{2cohagda-pub}}] \label{truecexmp}
We fill this gap with a slightly more complex example. Define $\E_{\textit{new}}$ by $\obb(\E_{\textit{new}}) \coloneqq \2$ and $\homm_{\E_{\textit{new}}}(0,0) \coloneqq \1$, $\homm_{\E_{\textit{new}}}(0,1) \coloneqq S^1$, $\homm_{\E_{\textit{new}}}(1,0) \coloneqq \1$, and $\homm_{\E_{\textit{new}}}(1,1) \coloneqq S^1 \to S^1$. Define composition of morphisms as follows.
\begin{itemize} 
\item For all $g, f : \homm_{\E_{\textit{new}}}(1,1)$, define $g \circ f$ as the usual composite of functions.
\item For all $h : \homm_{\E_{\textit{new}}}(1,1)$ and $x : \homm_{\E_{\textit{new}}}(0,1)$, let  $h \circ x \coloneqq h(x)$.
\item For all $y : \homm_{\E_{\textit{new}}}(0,1)$ and $x : \homm_{\E_{\textit{new}}}(1,0)$, define $y \circ x$ as the constant map at $y$. 
\item Take the obvious identity morphisms.
\end{itemize}
The unit and associativity laws hold definitionally after case splitting on $\2$. Next, recall from \cite[Lemma 6.4.2]{Uni13} the function $\tau : \prod_{x : S^1}x = x$ defined by $S^1$-induction with $\tau(\base) \equiv \loopp$. Define the graph $\Gamma$ with a single vertex $v$ and a single edge $\ell$ from $v$ to itself. Define the diagram $F : \Gamma \to \E_{\textit{new}}$ by $F(v) \coloneqq 0$ and $F(\ell) \coloneqq \idd_0$. Consider the cocone $\K \coloneqq \mleft(1, \base, \loopp\mright)$ under $F$:
\[\begin{tikzcd}
	0 && 0 \\
	& 1
	\arrow[""{name=0, anchor=center, inner sep=0}, "\ast", from=1-1, to=1-3]
	\arrow["{\base}"', from=1-1, to=2-2]
	\arrow["{\base}", from=1-3, to=2-2]
	\arrow["{\loopp}"{description}, draw=none, from=0, to=2-2]
\end{tikzcd}\] The function $\postcomp_{\K}(0) : \1 \to \1$ is clearly an equivalence. Further, by the universal property of $S^1$, the function $\postcomp_{\K}(1) : \mleft(S^1 \to S^1\mright) \to \sum_{x : S^1}x = x$, defined by $f \mapsto \mleft( f(\base), \ap_f(\loopp)\mright)$, is an equivalence. Hence $\K$ is colimiting.

We want to build a functor $\Lambda$ on $\E_{\textit{new}}$ that fails to take $\K$ to a colimiting cocone. As before, define $\Lambda : \E_{\textit{new}} \to \E_{\textit{new}}$ as the identity on objects and morphisms. But now define \[
\Lambda_{\circ}(g,f) \coloneqq 
\begin{cases}
\tau(g)^{-1}  & \text{$g : \homm_{\E_{\textit{new}}}(0, 1)$ and $f : \homm_{\E_{\textit{new}}}(0, 0)$} \\  
\refl_{g \circ f} & \text{otherwise}
\end{cases}
\] We still have an evident wild adjunction $\Lambda \dashv \Lambda$. Note that $\Lambda$ takes $\K$ to the following cocone under $F$: 
\[\begin{tikzcd}
	0 && 0 \\
	& 1
	\arrow[""{name=0, anchor=center, inner sep=0}, "\ast", from=1-1, to=1-3]
	\arrow["{\base}"', from=1-1, to=2-2]
	\arrow["{\base}", from=1-3, to=2-2]
	\arrow["{\kappa}"{description}, draw=none, from=0, to=2-2]
\end{tikzcd}\]
where $\kappa \coloneqq \Lambda_{\circ}(\base, \idd_0) \cdot \loopp : \base = \base$.
But $\Lambda_{\circ}(\base, \idd_0) \equiv \loopp^{-1}$, so that $\Lambda(\K) = \mleft(1, \base, \refl_{\base}\mright)$. We claim that this cocone is not colimiting. To see this, notice that $\postcomp_{\Lambda(\K)}(1) : \mleft(S^1 \to S^1\mright) \to \sum_{x : S^1}\mleft(x = x\mright)$ is defined by $f \mapsto \mleft(f(\base), \refl_{f(\base)}\mright)$. This is not an equivalence, because its fiber over $\mleft(\base, \loopp\mright)$ is empty. Indeed, an identity $\mleft(\base, \loopp\mright) = \mleft(f(\base), \refl\mright)$ for some $f$ would imply that $\loopp = \refl_{\base}$, which is false. 
\end{exa}

\begin{rem}
Kraus has used the family $\L$ of nontrivial loops, defined in \autoref{eq-counter}, to form a wild category that is provably not a bicategory~\cite[Lemma 8]{eating}, i.e., that fails to satisfy the triangle and pentagon identities. That said, the issue addressed by our two counterexamples---the cocontinuity of wild left adjoints---is orthogonal to Kraus's concern. Indeed, both $\E$ and $\E_{\textit{new}}$ are themselves bicategories, so imposing this extra structure does not fix our problem.
\end{rem}

\section{Suspension is $2$-coherent} \label{SuspCol}

This section (along with \autoref{ModCol}) offers evidence that \autoref{2coherdef} is useful in practice. We show that the suspension endofunctor $\Sigma : \U_{\ast} \to \U_{\ast}$ on the wild category of pointed types is a $2$-coherent left adjoint to the loop space endofunctor $\Omega$. By \autoref{LAPC}, we deduce that $\Sigma$ preserves (graph-indexed) colimits. Although the diagram \eqref{2coher} is a valuable approach to proving this preservation property, it does rely on a new trick based on \emph{homogeneous types} to handle the path algebra generated by $\Sigma$. The adjunction $\Sigma \dashv \Omega$ is already known~\cite[Lemma 2.16]{Hur}, so our contribution is verifying that $\Sigma$ is $2$-coherent. Also, to our knowledge, ours is the first proof in HoTT (in particular, Book HoTT) that $\Sigma$ preserves colimits.

If $X$ is a pointed type, then $\ty(X)$ and $\pt(X)$ denote the underlying type and basepoint of $X$, respectively. If $f : X \to_{\ast} Y$ is a pointed map, then $\fun(f)$ and $\bp(f)$ denote $f$'s underlying function and proof of basepoint preservation, respectively.
\begin{defi}
Let $f_1, f_2 : X_1 \to_{\ast} X_2$ be pointed maps. A \emph{pointed homotopy $f_1 \sim_{\ast} f_2$} is a homotopy $H : \fun(f_1) \sim \fun(f_2)$ together with a path $ H(\pt(X_1)) \cdot \bp(f_2) = \bp(f_1)$.
\end{defi}
The SIP for pointed maps says that the canonical function $f_1 = f_2 \to f_1 \sim_{\ast} f_2$ is an equivalence, whose  inverse we denote by $\mleft\langle{{-},{-}}\mright\rangle$. We use this equivalence to define the terms $\Sigma_{\circ}$ and $\nat_{\dom}$. Doing so will help us manipulate them in the proof of $2$-coherence. 

The composition law's first component is defined by induction (see \autoref{suspdef}), with $t_{\mathsf{N}}$ and $t_{\mathsf{S}}$ trivial and $T$ defined, for suitable pointed maps $r$ and $s$, via the commuting square
\[\begin{tikzcd}[column sep = 70]
	{\ap_{\Sigma(s \circ r)}(\glue(x))} & {\ap_{\Sigma(s) \circ  \Sigma(r)}(\glue(x))} \\
	{\glue(\fun(s)(\fun(r)(x)))} & {\ap_{\Sigma(s)}(\glue(\fun(r)(x)))}
	\arrow[equals, from=1-1, to=1-2]
	\arrow["{\beta_{\Sigma(s \circ r)}(x)}"', equals, from=1-1, to=2-1]
	\arrow["{\textit{via $\beta_{\Sigma(r)}(x)$}}", equals, from=1-2, to=2-2]
	\arrow["{\textit{via $\beta_{\Sigma(s)}(\fun(r)(x))$}}"', equals, from=2-1, to=2-2]
\end{tikzcd}\]
Its second component is trivial.
The adjunction $\Sigma \dashv \Omega$ is defined by
\begin{align*}
& \Phi \ : \  \homm_{\U_{\ast}}(\Sigma(X, x_0), \mleft(Y, y_0\mright)) \xrightarrow{\simeq} \homm_{\U_{\ast}}(\mleft(X, x_0\mright), \Omega(Y, y_0))
\\  & \Phi(h, h_0) \ \coloneqq \ (\lambda{x}.\underbrace{h_0^{-1} \cdot \ap_h(\glue(x) \cdot \glue(x_0)^{-1}) \cdot h_0}_{\xi(x,h, h_0)} , \zeta(\glue(x_0), h_0) ) 
\end{align*}
where the underbrace denotes term abbreviation and $\zeta(\glue(x_0), h_0)$ denotes the evident path of type $\xi(x_0,h,h_0)  = \refl_{y_0}$. For all $f^{\ast} \coloneqq \mleft(f, f_0\mright) : \mleft(Z, z_0\mright) \to_{\ast} \mleft(X, x_0\mright)$ and $h^{\ast} \coloneqq \mleft(h, h_0\mright) : \Sigma(X, x_0) \to_{\ast} \mleft(Y, y_0\mright)$, $\nat_{\dom}(f^{\ast}, h^{\ast})$ is defined as the path
\begin{align*}
& \ \Phi(h^{\ast}) \circ  f^{\ast}
\\ \equiv & \ \mleft(\lambda{x}.\xi(f(x), h^{\ast}), \ap_{\xi({-}, h^{\ast})}(f_0) \cdot \zeta(\glue(x_0), h_0)  \mright)
\\ = & \ \mleft(\lambda{x}.\xi(x,h \circ \fun(\Sigma(f^{\ast})), h_0), \zeta(\glue(z_0), h_0)  \mright) \tag{$\mleft\langle{\Theta , \Theta_0}\mright\rangle$}
\\  \equiv & \ \Phi(h \circ \fun(\Sigma(f^{\ast})), h_0) 
\\ \equiv & \ \Phi(h^{\ast} \circ \Sigma(f^{\ast}))
\end{align*}
Here, for each $z : Z$, $\Theta(z)$ is defined as the path
\[\begin{tikzcd}
	{h_0^{-1} \cdot \ap_h(\glue(f(z)) \cdot \glue(f(z_0))^{-1}) \cdot h_0} \\
	{h_0^{-1} \cdot \ap_{h \circ \fun(\Sigma(f^{\ast}))}(\glue(z) \cdot \glue(z_0)^{-1}) \cdot h_0}
	\arrow["{\textit{via $\beta_{\Sigma(f^{\ast})}(z)$ and $\beta_{\Sigma(f^{\ast})}(z_0)$}}", Rightarrow, no head, from=1-1, to=2-1]
\end{tikzcd}\] and $\Theta_0$ is defined by path induction on $\beta_{\Sigma(f^{\ast})}(z_0)$.

Now that we've defined $\Sigma_{\circ}$ and $\nat_{\dom}$, we claim that the diagram \eqref{2coher} commutes. The SIP for pointed homotopies turns this goal into a \emph{double pointed homotopy}:
\begin{defi}
  Let $f_1$ and $f_2$ be pointed maps and let $\mleft(H_1, \kappa_1\mright), \mleft(H_2, \kappa_2\mright) : f_1 \sim_{\ast} f_2$. A \emph{double pointed homotopy} $\mleft(H_1, \kappa_1\mright) \sim_{\ast}^2 \mleft(H_2, \kappa_2\mright)$ consists of a homotopy $\mu : H_1 \sim H_2$ and a commuting triangle 
\[\begin{tikzcd}
	{H_2(\pt(X_1)) \cdot \bp(f_2)} \\
	{H_1(\pt(X_1)) \cdot \bp(f_2)} & {\bp(f_1)}
	\arrow["{\ap_{{-} \cdot \bp(f_2)}(\mu(\pt(X_1)))}"', equals, from=1-1, to=2-1]
	\arrow["{\kappa_2}", equals, from=1-1, to=2-2]
	\arrow["{\kappa_1}"', equals, from=2-1, to=2-2]
\end{tikzcd}\]
\end{defi}

To construct the first component of the desired double pointed homotopy, we have to reduce a large expression involving various typal $\beta$-rules (coming from the $\nat_{\dom}$ and $\Sigma_{\circ}$ edges of \eqref{2coher}). We do so via a mechanical process of iteratively eliminating matching $\beta$-rules. The commuting triangle, however, is infeasible to construct directly. The problem is that it contains the entire first component, which involves complex path algebra and does not reduce at $\pt(W)$. In addition, it contains a handful of nontrivial path inductions from $\Theta$'s second component. The result is an expression that is simply too big.

Luckily, we can get the commuting triangle for free by noticing the special nature of loop spaces. Every loop space is a \emph{homogeneous type}, i.e., a pointed type $\mleft(X, x_0\mright)$ equipped with a pointed equivalence $\auto_x : \mleft(X, x_0\mright) \xrightarrow{\simeq}_{\ast} \mleft(X, x\mright)$ for every $x : X$. In this case, we also say $X$ is \emph{homogeneous at $x_0$}. We note a few things about such pointed types. First, if $M$ is homogeneous, then it is so at all its elements. Second, by applying ${-} \circ \auto_{\pt(M)}^{-1}$ to $\auto$, we can make every homogeneous type $M$ \emph{strongly homogeneous}, i.e., make $\auto$ satisfy $\auto_{\pt(M)} = \idd_M$. Finally, as $\Omega$ preserves pointed equivalences, if $\mleft(X, x_0\mright)$ is homogeneous, we have $\homogpth(x_0, x) : \mleft(x_0 = x_0\mright) \xrightarrow{\simeq} \mleft(x = x\mright)$ for every $x : X$. 
Now, a key insight for our goal is Cavallo's trick: two pointed maps into homogeneous types are pointed-homotopic when their underlying functions are homotopic~\cite[\myhref{https://github.com/agda/cubical/blob/master/Cubical/Foundations/Pointed/Homogeneous.agda\#L41}{$\mathtt{\rightarrow \! \! \cdot Homogeneous \! \! \equiv}$}]{cubical}.
As our goal is a higher pointed homotopy, we want the following higher version of the trick, which will finish the proof that $\Sigma$ is $2$-coherent~\cite[\myhref{https://github.com/PHart3/colimits-agda/blob/lapc-lmcs/HoTT-Agda/theorems/homotopy/Suspension/Susp-2coher.agda}{Susp-2coher}]{2cohagda-pub}:
\begin{lem}[{\cite[\myhref{https://github.com/PHart3/colimits-agda/blob/lapc-lmcs/HoTT-Agda/core/lib/types/Homogeneous.agda\#L151}{\text{$\sim \!\! \odot$homog$\sim$}}]{2cohagda-pub}}] \label{trick}
  Let $f_1, f_2 : X_1 \to_{\ast} X_2$ with $X_2$ homogeneous. Let $\mleft(H_1, \kappa_1\mright), \mleft(H_2, \kappa_2\mright) : f_1 \sim_{\ast} f_2$. If $H_1 \sim H_2$, then $\mleft(H_1, \kappa_1\mright) \sim_{\ast}^2 \mleft(H_2, \kappa_2\mright)$.
\end{lem}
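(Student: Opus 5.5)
The idea is to exploit the fact that \autoref{trick} only asks for \emph{some} double pointed homotopy. We do not need to use the given homotopy $H_1 \sim H_2$ as the first component. We are free to replace it by a corrected homotopy $\mu'$, built with the homogeneity of $X_2$, so that the triangle holds at $\pt(X_1)$.

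\textbf{Step 1: reduce to $H_1 \equiv H_2$.} By function extensionality, the hypothesis $H_1 \sim H_2$ gives a path $H_1 = H_2$. Path induction on this path reduces the goal to the case $H_1 \equiv H_2 \eqqcolon H$. We must then produce $\mu' : H \sim H$ together with a commuting triangle
\[
\ap_{{-}\cdot \bp(f_2)}(\mu'(\pt(X_1))) \cdot \kappa_1 = \kappa_2 .
\]
Write $x_1 \coloneqq \pt(X_1)$. Whiskering ${-} \cdot \bp(f_2)$ is an equivalence of path types, so $\ap_{{-}\cdot \bp(f_2)}$ is an equivalence on loops. Hence there is a unique loop $\alpha : H(x_1) = H(x_1)$ whose whiskering is $\kappa_2 \cdot \kappa_1^{-1}$. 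It therefore suffices to build $\mu' : \prod_{x : X_1} H(x) = H(x)$ with $\mu'(x_1) = \alpha$; the triangle then follows by routine path algebra.

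\textbf{Step 2: extend the loop $\alpha$ at $x_1$ to a loop at every $x$.} This is where homogeneity enters, and it is the step I expect to be the main obstacle. For each $x : X_1$, I would construct an equivalence
\[
e_x : \Omega^2(\ty(X_2), \pt(X_2)) \simeq \mleft(H(x) = H(x)\mright).
\]
It is the composite of two equivalences:
\begin{itemize}
\item $\Omega^2(\auto_{\fun(f_2)(x)})$, using that $\Omega$ (hence $\Omega^2$) preserves pointed equivalences, from $\Omega^2(\ty(X_2),\pt(X_2))$ to $\Omega^2(\ty(X_2), \fun(f_2)(x))$;
\item the standard equivalence $\Omega^2(\ty(X_2), b) \simeq (p = p)$ for any path $p : a = b$, given by conjugation/whiskering with $p$ (proved by path induction on $p$), applied to $p \coloneqq H(x)$.
\end{itemize}
This is the higher analogue of $\homogpth$. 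Then define
\[
\mu'(x) \coloneqq e_x(e_{x_1}^{-1}(\alpha)).
\]
The counit of $e_{x_1}$ gives $\mu'(x_1) = \alpha$, as required.

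\textbf{Step 3: conclude.} Combining Step 1 and Step 2 gives the double pointed homotopy $(H,\kappa_1) \sim_{\ast}^2 (H,\kappa_2)$, and transporting back along the path $H_1 = H_2$ gives the lemma.

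The only delicate part is making sure the equivalences $e_x$ are genuinely uniform in $x$, so that $\mu'$ is a well-defined dependent function. This is why $\auto$ must be given for \emph{every} point of $X_2$, not just the basepoint. No strong homogeneity is needed, since we never require $\mu'$ to agree with the original homotopy anywhere.
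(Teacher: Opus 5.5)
Your argument is correct, and its overall strategy is the paper's. You reduce to $H_1 = H_2$, note that the triangle constrains only $\mu(\pt(X_1))$ and can be solved there because whiskering by $\bp(f_2)$ is an equivalence, and use homogeneity to extend the chosen loop at $\pt(X_1)$ to a whole self-homotopy, discarding the given $H_1 \sim H_2$.

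The difference is in how the extension is built. The paper proves a fact one dimension lower: $\ev_{\pt(X_1),\fun(k)} : (\fun(k) \sim \fun(k), \refl) \to_{\ast} \Omega(X_2, k_0)$ has a \emph{pointed} section built from $\homogpth$. This requires promoting $X_2$ to a strongly homogeneous type and supplying both components of a pointed homotopy $\gamma$. The paper also path-inducts on $H_1$ and on $\bp(f_1)$, so that $\mu$ becomes a loop at the basepoint $\refl$, and then obtains the double-loop section as $\Omega(\sigma^{\ast})$. You work directly with double loops through the equivalences $e_x$. Conjugation by $H(x)$ replaces those inductions, and the section equation $\mu'(\pt(X_1)) = \alpha$ comes for free from the counit of $e_{\pt(X_1)}$.

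Precomposing with $e_{\pt(X_1)}^{-1}$ plays the role of the paper's normalization of $\auto$ to a strongly homogeneous structure. That, and not the freedom to ignore the given homotopy, is why you need no strong homogeneity. Your route is shorter and skips the pointed section and its pointed homotopy. The paper's route isolates a reusable lemma (Cavallo's trick one level down, in pointed form) and exhibits the higher trick as its image under $\Omega$.

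One small slip: $\ap_{{-}\cdot\bp(f_2)}(\mu(\pt(X_1)))$ runs from $H_1(\pt(X_1))\cdot\bp(f_2)$ to $H_2(\pt(X_1))\cdot\bp(f_2)$. So the triangle reads $\ap_{{-}\cdot\bp(f_2)}(\mu(\pt(X_1)))\cdot\kappa_2 = \kappa_1$. After the reduction you should pick $\alpha$ with whiskering $\kappa_1\cdot\kappa_2^{-1}$, i.e.\ the inverse of your choice; nothing else changes.
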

\begin{proof}
We begin with a general observation. Let $k : X_1 \to_{\ast} X_2$ and $k_0 \coloneqq \fun(k)(\pt(X_1))$. Consider the evaluation map $\ev_{\pt(X_1),\fun(k)} : \mleft(\fun(k) \sim \fun(k), \refl \mright) \to_{\ast} \Omega(X_2, k_0)$.  As $X_2$ is homogeneous at $k_0$, this map has a pointed section $\sigma^{\ast}$ whose underlying function sends a loop $p$ at $k_0$ to the homotopy $\sigma(p,x) \coloneqq \homogpth(k_0, \fun(k)(x), p)$. It's easy to check that $\sigma$ is pointed. It remains to construct a pointed homotopy $\gamma : \ev_{\pt(X_1),\fun(k)} \circ \sigma^{\ast} \sim_{\ast} \idd$.  
The first component of $\gamma$ is a homotopy $ \homogpth(k_0, k_0) \sim \idd_{k_0 = k_0}$, which we get by promoting $X_2$ to a strongly homogeneous type.  The second component, which also uses the fact $X_2$ is strongly homogeneous, follows routinely.
 
Let $Q : H_1 \sim H_2$. 
By strong function extensionality, we ``path induct'' on $H_1$ and $Q$ so that they are both identity homotopies. Further, by generalizing $\pt(X_2)$, we induct on $\bp(f_1)$ to make it $\refl_{w_0}$ with $w_0 \coloneqq \fun(f_1)(\pt(X_1))$. By our general observation, $\ev_{\pt(X_1), \fun(f_1)}$ has a pointed section $\sigma^{\ast}$, so that $\Omega(\sigma^{\ast})$ is a pointed section of $\Omega(\ev_{\pt(X_1), \fun(f_1)})$. Now, we want a pair $\mleft(\mu, \mu_0\mright) : \mleft(\refl, \kappa_1\mright) \sim_{\ast}^2 \mleft(\refl, \kappa_2\mright)$. We define $\mu : \refl \sim \refl$ as the image $\fun(\Omega(\sigma^{\ast}))(\kappa)$ of a certain loop $\kappa$ at $\refl_{w_0}$ under $\Omega(\sigma^{\ast})$. To make the right choice for $\kappa$, we look ahead to the commuting triangle $\mu_0$:
\[\begin{tikzcd}
	{\bp(f_1)} \\
	{\bp(f_1)} & {\refl_{w_0}}
	\arrow["{\ap_{{-} \cdot \bp(f_1)}(\mu (\pt(X_1)))}"', equals, from=1-1, to=2-1]
	\arrow["{\kappa_2}", equals, from=1-1, to=2-2]
	\arrow["{\kappa_1}"', equals, from=2-1, to=2-2]
\end{tikzcd}\] Since $\Omega(\sigma^{\ast})$ is a pointed section of $\Omega(\ev_{\pt(X_1), \fun(f_1)})$, $\mu (\pt(X_1))$ will equal $\kappa$. Finally, ${-} \cdot \bp(f_1)$ is an equivalence, so we simply solve for $\kappa$.
\end{proof}

\begin{thm}[{\cite[\myhref{https://github.com/PHart3/colimits-agda/blob/lapc-lmcs/HoTT-Agda/theorems/homotopy/Suspension/Susp-colim.agda}{Susp-colim}]{2cohagda-pub}}]\label{suspcol}
The suspension $\Sigma : \U_{\ast} \to \U_{\ast}$ preserves colimits.
\end{thm}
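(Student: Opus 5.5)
By \autoref{LAPC}, it suffices to show that the adjunction $\mleft(\Phi, \nat_{\cod}, \nat_{\dom}\mright) : \Sigma \dashv \Omega$ recalled above makes $\Sigma$ a $2$-coherent left adjoint. Then for every graph $\Gamma$, every diagram $F : \Gamma \to \U_{\ast}$, and every colimiting cocone $\K$ under $F$, the induced cocone $\Sigma(\K)$ under $\Sigma(F)$ is colimiting, which is exactly the statement. So the whole proof reduces to filling the hexagon \eqref{2coher} for arbitrary pointed maps $h_1 : \Sigma(X) \to_{\ast} Y$, $h_2 : Z \to_{\ast} X$ and $h_3 : W \to_{\ast} Z$.

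Every edge of \eqref{2coher} is a path between pointed maps $W \to_{\ast} \Omega(Y)$. The edges are $\assoc$, two instances of $\nat_{\dom}$, $\ap_{\Phi}$ of $\assoc$, and $\ap_{\Phi}$ of $\ap_{h_1 \circ {-}}(\Sigma_{\circ}(h_2,h_3))$. By construction, each of these is the image under $\mleft\langle{{-},{-}}\mright\rangle$ of an explicit pointed homotopy: $\nat_{\dom}$ is $\mleft\langle{\Theta,\Theta_0}\mright\rangle$, and $\Sigma_{\circ}$ is built from $\beta$-rules. I would first push all these edges through the SIP for pointed maps, so that concatenation of paths becomes concatenation of pointed homotopies and $\ap$ of a function becomes whiskering. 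The goal then becomes an identity between two composite pointed homotopies, and by the SIP for pointed homotopies this is a double pointed homotopy $\mleft(H_1,\kappa_1\mright) \sim_{\ast}^2 \mleft(H_2,\kappa_2\mright)$.

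For the first component, a homotopy $H_1 \sim H_2$, I fix $w : W$ and unfold both sides. Each side is a composite of paths between loops of the form $\xi({-}) = h_0^{-1} \cdot \ap_{h_1}(\glue(\cdot) \cdot \glue(\cdot)^{-1}) \cdot h_0$. Both composites are built from the $\beta$-rules $\beta_{\Sigma(h_2)}$, $\beta_{\Sigma(h_3)}$, $\beta_{\Sigma(h_2\circ h_3)}$, together with $\ap$-functoriality ($\ap_{g\circ f} = \ap_g \circ \ap_f$) and distribution of $\ap$ over concatenation and inverses. I would then cancel matching $\beta$-rules one at a time: the $\beta_{\Sigma(h_2 \circ h_3)}$ coming from $\nat_{\dom}(h_2 \circ h_3, h_1)$ cancels against the one appearing in $\Sigma_{\circ}$. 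What remains are the $\beta_{\Sigma(h_3)}$ and $\beta_{\Sigma(h_2)}$ terms, which match on both sides after applying homotopy naturality (\autoref{homnat}) to $\ap_{h_1 \circ \Sigma(h_2)}$. This step is long but mechanical. I would organize it by generalizing $h_0$ and the relevant $\glue$-images so that $h_0$ can be path-induced away.

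For the second component, the commuting triangle at the basepoint, I would not compute directly. It contains all of $H_1$ evaluated at $\pt(W)$ together with the path inductions from $\Theta_0$, and I expect this to be the main obstacle if attempted by hand. Instead, the codomain $\Omega(Y)$ is a loop space and hence homogeneous. So \autoref{trick} applies to $f_1, f_2 : W \to_{\ast} \Omega(Y)$ and yields the double pointed homotopy from the homotopy $H_1 \sim H_2$ constructed in the previous step. This completes $2$-coherence, and \autoref{LAPC} gives the theorem.
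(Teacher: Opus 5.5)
Your proposal is correct and follows the paper's own argument. You reduce via \autoref{LAPC} to $2$-coherence of $\Sigma \dashv \Omega$, use the SIP to turn \eqref{2coher} into a double pointed homotopy, build its homotopy component by mechanically cancelling matching $\beta$-rules, and get the basepoint triangle from the higher Cavallo trick (\autoref{trick}), which applies because the codomain $\Omega(Y)$ is homogeneous.
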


\begin{rem}
One may try to derive \autoref{suspcol} from the property that pushouts commute with colimits as follows. For types, the $3\times 3$ lemma~\cite[Section VII]{3x3} tells us pushouts commute with pushouts, and preservation of coproducts by wild left adjoints is easy to prove since the underlying graph is discrete. If every colimit is a pushout of coproducts, we're done. But there are problems: This characterization of colimits \emph{within $\U_{\ast}$} is quite nontrivial on its own~\cite[Section 8.1]{CSL25}. Also, the $3\times 3$ lemma on its face says nothing about pushouts in $\U_{\ast}$.
\end{rem}

Recall from \cite{acyclic} that a type is \emph{acyclic} if its suspension is contractible. We contribute a new closure property of acyclic types with the next corollary to \autoref{suspcol}. It uses Hart and Favonia's construction $\colimm^{\ast}$ of colimits in $\U_{\ast}$ as the cofiber of a map between colimits in $\U$~\cite[Theorem 15]{CSL25}. (Colimits in $\U$ are postulated as HITs definable from pushouts.)
\begin{cor}[{\cite[\myhref{https://github.com/PHart3/colimits-agda/blob/lapc-lmcs/HoTT-Agda/theorems/homotopy/Acyc-colim.agda}{Acyc-colim}]{2cohagda-pub}}] 
The pointed acyclic types are closed under colimits in $\U_{\ast}$.
\end{cor}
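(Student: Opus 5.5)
To prove the corollary, I would feed the suspension functor directly into \autoref{suspcol}. Let $\Gamma$ be a graph and $F : \Gamma \to \U_{\ast}$ a diagram with each $F_i$ pointed acyclic, so $\Sigma(F_i)$ is contractible. I take the colimiting cocone on $\colimm^{\ast}(F)$ from \cite[Theorem 15]{CSL25}. By \autoref{suspcol}, its image under $\Sigma$ is a colimiting cocone under $\Sigma(F)$ with apex $\Sigma(\colimm^{\ast}(F))$. It then suffices to show that any colimit in $\U_{\ast}$ of a diagram of contractible pointed types is contractible.

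For that, I use \autoref{colimwildc} directly rather than any explicit HIT description. For every pointed type $X$, the map
\[\postcomp(X) : \homm_{\U_{\ast}}(\Sigma(\colimm^{\ast}(F)), X) \to \limm_{i : \Gamma^{\op}}(\homm_{\U_{\ast}}(\Sigma(F_i), X))\]
is an equivalence. Each $\Sigma(F_i)$ is contractible, so it is pointed-equivalent to $\1$. Hence $\homm_{\U_{\ast}}(\Sigma(F_i), X)$ is contractible, since pointed maps out of $\1$ form a singleton $\sum_{x : X} x = \pt(X)$.

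A limit of a diagram of contractible types is contractible. The component $\prod_i$ is contractible, and for fixed $\delta$ each edge component $\prod_{i,j,g}(\ldots = \ldots)$ is an identity type in a contractible type, hence contractible. This step could also go through \autoref{limpreseq} applied to the constant diagram at $\1$.

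Therefore $\homm_{\U_{\ast}}(\Sigma(\colimm^{\ast}(F)), X)$ is contractible for every $X$. Taking $X \coloneqq \Sigma(\colimm^{\ast}(F))$ itself, the identity map equals the constant pointed map. Evaluating the underlying homotopy gives every point a path to the basepoint, so $\Sigma(\colimm^{\ast}(F))$ is contractible. This is the Yoneda-style argument that an object with contractible hom-types out of it is initial, and in $\U_{\ast}$ such an object is contractible. So $\colimm^{\ast}(F)$ is acyclic.

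The only subtle point is that \autoref{suspcol} must be stated for the specific colimit $\colimm^{\ast}(F)$, i.e.\ that $\Sigma$ sends the colimiting cocone of \cite{CSL25} to a colimiting cocone. This is exactly the content of \autoref{LAPC} instantiated with $\Sigma \dashv \Omega$ and its $2$-coherence, so I expect no genuine obstacle beyond bookkeeping. Acyclicity of the underlying type is what matters, and the pointedness of $\colimm^{\ast}(F)$ only serves to place it in $\U_{\ast}$.
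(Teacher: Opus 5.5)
Your proof is correct, but it takes a different route from the paper's.

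The paper combines \autoref{suspcol} with uniqueness of colimits in $\U_{\ast}$ (viewed as a wild bicategory) to get a pointed equivalence $\Sigma(\colimm^{\ast}(F)) \simeq_{\ast} \colimm^{\ast}(\Sigma(F))$. It then uses the explicit construction of $\colimm^{\ast}$ from \cite{CSL25}: $\colimm^{\ast}(\Sigma(F))$ is the cofiber of the map $\colimm(\1) \to \colimm(\ty \circ \Sigma(F))$ between HIT colimits in $\U$. That map is an equivalence because each $\ty(\Sigma(F_i))$ is contractible, and the cofiber of an equivalence is contractible.

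You instead argue purely from the universal property in \autoref{colimwildc}. Each $\Sigma(F_i)$ is initial in $\U_{\ast}$, so the limit of the types $\homm_{\U_{\ast}}(\Sigma(F_i), X)$ is contractible. Hence the suspended apex is initial, and an initial pointed type is contractible because the identity must equal the constant map.

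What your approach buys:
\begin{itemize}
\item It needs neither uniqueness of colimits, nor the concrete cofiber description of $\colimm^{\ast}$, nor the fact that HIT colimits in $\U$ send natural equivalences to equivalences.
\item It applies to any colimiting cocone in $\U_{\ast}$ with acyclic vertices, not only to $\colimm^{\ast}(F)$. In effect it is the general fact that a colimit of initial objects is initial, together with \autoref{LAPC}.
\end{itemize}

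What the paper's route buys: it produces, as a byproduct, the explicit identification $\Sigma(\colimm^{\ast}(F)) \simeq_{\ast} \colimm^{\ast}(\Sigma(F))$, and it makes the contractibility visible directly in the HIT presentation.
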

\begin{proof}
By \autoref{suspcol} and uniqueness of colimits in $\U_{\ast}$ (as in any wild bicategory), we have an equivalence of pointed types $\Sigma(\colimm^{\ast}(F)) \simeq_{\ast} \colimm^{\ast}(\Sigma(F))$. If $\ty(F_i)$ is acyclic for each $i : \Gamma_0$, then $\ty(\colimm^{\ast}(\Sigma(F)))$ is contractible as the cofiber of an equivalence, namely the function $\colimm(\1) \to \colimm(\ty \circ \Sigma(F))$ (between colimits in $\U$) induced by the unique natural transformation into $\ty \circ \Sigma(F)$. This completes the proof since equivalences preserve contractibility.
\end{proof}

\begin{note} \label{joinpres}
We can improve \autoref{suspcol} as follows. For types $T$ and $U$, recall that the join of $T$ and $U$~\cite[Section 6.8]{Uni13} is the pushout
\[\begin{tikzcd}
	{T \times U} & U \\
	T & {T \ast U}
	\arrow["{\pr_2}", from=1-1, to=1-2]
	\arrow["{\pr_1}"', from=1-1, to=2-1]
	\arrow[from=1-2, to=2-2]
	\arrow[from=2-1, to=2-2]
	\arrow["\lrcorner"{anchor=center, pos=0.125, rotate=180}, draw=none, from=2-2, to=1-1]
\end{tikzcd}\]
Let $X$ be a pointed type. We have a wild functor $X \ast {-} : \U_{\ast} \to \U_{\ast}$ and an isomorphism between $\2 \ast {-}$ and $\Sigma$, so $X \ast {-}$ generalizes the suspension. We also have a wild adjunction $X \ast {-} \dashv X \to_{\ast} \Omega({-})$~\cite[\myhref{https://github.com/PHart3/colimits-agda/blob/lapc-lmcs/HoTT-Agda/theorems/homotopy/Join/JoinAdjointLoopCod.agda\#L286}{JoinLoopCodAdj}]{2cohagda-pub}, which has previously been observed and mechanized by Cagne et al.~\cite[Proposition B.10]{symsph}. If $X \equiv \2$, this adjunction reduces to the suspension-loop adjunction described above (in light of the isomorphism between $\2 \to _{\ast} {-}$ and the identity wild functor). As $X \to_{\ast} Y$, with basepoint the constant map, is homogeneous when $Y$ is, we can use a parameterized variant of the argument for \autoref{suspcol} to show that $X \ast {-}$ is $2$-coherent and thus preserves colimits~\cite[\myhref{https://github.com/PHart3/colimits-agda/blob/lapc-lmcs/HoTT-Agda/theorems/homotopy/Join/Join-colim.agda}{Join-colim}]{2cohagda-pub}.
\end{note}

\section{Colimits of modal types} \label{ModCol}

We end with another application of \autoref{LAPC} in synthetic homotopy theory. We prove that all \emph{modalities} on coslices of a universe $\U$ are $2$-coherent and thereby construct (graph-indexed) colimits of modal types. Consider functions $\modal : \U \to \U$ and $\eta : \prod_{X : \U}X \to \modal{X}$. A type $X : \U$ is \emph{modal} if $\eta_X$ is an equivalence. Let $\U_{\modal}$ denote the subuniverse of modal types.
\begin{defi}[{\cite[\myhref{https://github.com/PHart3/colimits-agda/blob/lapc-lmcs/HoTT-Agda/core/lib/types/Modality.agda\#L10}{Modality}]{2cohagda-pub}}]
We say that $\modal$ is a \emph{modality} if 
\begin{itemize}
\item for all $X : \U$, $\modal{X}$ is modal;
\item for all $X : \U$ and $x, y : \modal{X}$, the identity type $x = y$ is modal;
\item for all $X : \U$ and $P : \modal{X} \to \U_{\modal}$, the function ${-} \circ \eta_X : \mleft(\prod_{x : \modal{X}}P(x)\mright) \to \mleft(\prod_{x : X}P(\eta_X(x))\mright)$ has a section. (This condition is called \emph{$\modal$-induction}.)
\end{itemize}
\end{defi} 
 
Let $\modal$ be a modality and $A$ be a type. By $\modal$-induction (including the associated nondependent recursion principle), we have a functor $\modal^A : A/\U \to \mleft(A/\U\mright)_{\modal}$ into the full wild subcategory of $A/\U$ on those $\mleft(X,s\mright)$ with $X$ modal, called \emph{modal $A$-types}. It is a straightforward extension of the functor $\modal : \U \to \U_{\modal}$: for example, its object function sends $\mleft(X,s\mright)$ to $\mleft(\modal{X} , \eta_X \circ s \mright)$. By $\modal$-induction, we also have a family of equivalences $\mleft(\modal^A{U} \to_A V\mright) \xrightarrow{\simeq} \mleft(U \to_A V \mright)$ for $A$-types $U$ and modal $A$-types $V$ that is natural in $U$. (It's trivially natural in $V$.) Hence we have an adjunction between $\modal^A$ and the forgetful functor $\F_{\modal,A}$~\cite[\myhref{https://github.com/PHart3/colimits-agda/blob/lapc-lmcs/HoTT-Agda/theorems/modality/Mod-Cos-adj.agda\#L34}{Mod-cos-adj}]{2cohagda-pub}.
\begin{thm}[{\cite[\myhref{https://github.com/PHart3/colimits-agda/blob/lapc-lmcs/HoTT-Agda/theorems/modality/Mod-Cos-adj.agda\#L52}{Mod-cos-adj-2coh}]{2cohagda-pub}}] \label{mod2coh}
The left adjoint $\modal^A$ is $2$-coherent.
\end{thm}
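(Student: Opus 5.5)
Our plan is to verify \eqref{2coher} directly, using that every equality involved is an equality of morphisms $U \to_A \F_{\modal,A}(V)$ into a modal $A$-type, i.e.\ of pairs (function into a modal type, homotopy). First we make the adjunction data explicit. Here $\psi(h_1) \coloneqq h_1 \circ (\eta, \text{trivial})$ is precomposition with the unit map $U \to_A \modal^A U$. With this choice, $\nat_{\dom}(f, h_1)$ is the path $(h_1 \circ \eta_U) \circ f = (h_1 \circ \modal^A(f)) \circ \eta_{U'}$. It is obtained from two ingredients: the associativity laws of $A/\U$, and the computation path $\modal(f) \circ \eta \sim \eta \circ f$ given by the $\modal$-recursion rule, together with its coherence with the $A$-structure.

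Similarly, $\modal^A_{\circ}(h_2,h_3)$ is the path $\modal(h_2)\circ\modal(h_3) = \modal(h_2 \circ h_3)$. It is built by $\modal$-induction: one uses the computation rules of $\modal(h_2)$, $\modal(h_3)$ and $\modal(h_2\circ h_3)$ on elements $\eta(w)$, and then extends to all of $\modal W$ via the section of ${-}\circ\eta$. The key observation is that, by definition, precomposing $\modal^A_{\circ}(h_2,h_3)$ with $\eta_W$ recovers exactly the composite of these computation paths. That composite is the computation rule of the induction, namely the homotopy witnessing that the section is a section.

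Next we apply the structure identity principle for morphisms in $A/\U$, as we did for pointed maps in \autoref{SuspCol}. This reduces \eqref{2coher} to two goals: an equality of homotopies between the underlying functions $W \to V$, and a compatibility of these with the $A$-components. Both sides of \eqref{2coher} evaluated at $w : W$ unfold to composites of these pieces: the computation paths $\beta$ for $\modal(h_2)$ at $\eta(h_3(w))$, for $\modal(h_3)$ at $\eta(w)$, and for $\modal(h_2\circ h_3)$ at $\eta(w)$, together with the application of $h_1$ to them and associativity paths. On the right side, $\ap_{h_1\circ{-}}(\modal^A_{\circ}(h_2,h_3))$ appears only whiskered by $\eta_W$ (this whiskering is what $\psi$ does), so it reduces by the computation rule just noted. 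The homotopy part then becomes a path-algebra identity among the same $\beta$-paths, which we close using homotopy naturality (\autoref{homnat}). This step is simpler than in the suspension case, since the computation rule is a single homotopy rather than a family of $\beta$-rules on path constructors.

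We expect the main obstacle to be the second component, the coherence with the $A$-structure. There the associators of $A/\U$ (defined by path induction) interact with the pointing homotopies $\modal(f)\circ\eta\circ s \sim \eta\circ s'$. To keep this tractable, we would first generalize the $A$-structure homotopies of $h_2$ and $h_3$ and path-induct on them. This reduces to the case where they are reflexivity, after which the triangle reduces to the pointwise identity already established, precomposed with the point maps, plus routine whiskering. If that induction is blocked by the dependence of the $\modal$-recursion on these homotopies, we instead state the whole goal as an identity of homotopies natural in $h_2,h_3$ and prove it by $\modal$-induction into the modal identity types of $V$.
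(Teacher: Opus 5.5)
Your proposal is correct and takes essentially the same approach as the paper's one-line proof. The only $\modal$-specific input is the computation rule of $\modal$-induction, which reduces $\modal^A_{\circ}(h_2,h_3)$ at $\eta(w)$ (this whiskering is exactly what $\psi$, precomposition with $\eta^A$, contributes) to the composite of the computation paths; after that, the homotopy component of \eqref{2coher} is path algebra and the $A$-component yields to path induction on the coslice structure homotopies, which is precisely the paper's ``$\modal$-induction followed by a burst of path induction.''
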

\begin{proof}
By $\modal$-induction followed by a burst of path induction.
\end{proof}
\begin{cor}[{\cite[\myhref{https://github.com/PHart3/colimits-agda/blob/lapc-lmcs/HoTT-Agda/theorems/modality/Mod-colim.agda}{Mod-colim}]{2cohagda-pub}}]  \label{modcolbuild}
The wild category $\mleft(A/\U\mright)_{\modal}$ has all colimits.
\end{cor}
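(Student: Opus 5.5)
Given a graph $\Gamma$ and a diagram $F : \Gamma \to \mleft(A/\U\mright)_{\modal}$, I will build its colimit in $A/\U$ and then reflect it into modal $A$-types. First, compose $F$ with the forgetful functor $\F_{\modal,A}$ to get a diagram $\F_{\modal,A}(F)$ in $A/\U$. I then take the explicit colimiting cocone $\K \coloneqq \mleft(C, r, K\mright)$ under it constructed by Hart and Favonia in $A/\U$~\cite[Section 5]{CSL25}. Next, I apply \autoref{LAPC} to the adjunction $\modal^A \dashv \F_{\modal,A}$. By \autoref{mod2coh}, $\modal^A$ is $2$-coherent, so the image cocone $\modal^A(\K)$ is colimiting in $\mleft(A/\U\mright)_{\modal}$ as a cocone under the diagram $\modal^A(\F_{\modal,A}(F))$.

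It remains to replace the diagram $\modal^A \circ \F_{\modal,A} \circ F$ by $F$ itself. Since each $F_i$ is modal, the unit $\eta_{F_i}$, viewed as an $A$-map $F_i \to_A \modal^A(F_i)$, is an equivalence in $\mleft(A/\U\mright)_{\modal}$. Its inverse is obtained from the modal recursion principle, and the two composites are identities by $\modal$-induction. These units assemble into a natural isomorphism of diagrams $F \Rightarrow \modal^A(F)$. The naturality squares hold because $\modal^A$ on morphisms is defined by recursion, so the computation rule gives $\modal^A(f) \circ \eta = \eta \circ f$.

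The key transfer step is then a lemma in the style of general wild bicategories. Suppose a cocone is colimiting under a diagram $G$, and $\alpha : F \Rightarrow G$ is a natural isomorphism of diagrams. Then precomposing the cocone with $\alpha$ gives a cocone under $F$ that is again colimiting. To prove this, I factor the new post-composition map as the old one followed by the map $\limm(\homm(\alpha, X))$. That second map is an equivalence by \autoref{limpreseq}, because precomposition with each equivalence $\alpha_i$ is an equivalence of hom types.

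I expect the main obstacle to be the factorization in this transfer step. One must identify the composite of cocone data with the post-composition map definitionally up to a path. That identification involves associativity paths and the naturality squares of $\alpha$, so without bicategorical coherences (a pentagon) in $\mleft(A/\U\mright)_{\modal}$ it may not typecheck directly. I would first try to sidestep the issue using univalence of $A/\U$ (\autoref{unvwc}), which is inherited by the full subcategory of modal $A$-types. By univalence, each equivalence $\alpha_i$ comes from an identity of objects. For the edges, I would use the coslice structure identity principle to identify $\modal^A(\F_{\modal,A}(F))$ with $F$ as diagrams. Transporting the colimiting cocone $\modal^A(\K)$ along this identity then yields a colimit of $F$ with no coherence bookkeeping. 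If that path identification of the edge maps also proves awkward, I would fall back on proving the transfer lemma using the associativity and unit laws available in $A/\U$, which form a wild bicategory by~\cite{CSL25}.
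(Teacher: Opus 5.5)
Your proposal is correct and follows the paper's proof. You take the colimit of $\F_{\modal,A}(F)$ in $A/\U$ from \cite{CSL25}, apply \autoref{LAPC} via \autoref{mod2coh}, use the units $\eta^A$ as a natural isomorphism $F \Rightarrow \modal^A \circ \F_{\modal,A} \circ F$, and use univalence of $\mleft(A/\U\mright)_{\modal}$ for the transfer. The one difference is small: the paper shows that precomposing with the natural isomorphism preserves colimiting cocones (univalence reduces it to the identity, then a standard bicategorical property finishes), while your transport along an identification of diagrams only produces some colimit of $F$ but avoids that bicategorical step, which is enough for the corollary.
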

\begin{proof}
Let $\Gamma$ be a graph and $F$ a $\Gamma$-shaped diagram in $\mleft(A/\U\mright)_{\modal}$. By \cite[Theorem 15]{CSL25}, the diagram $\F_{\modal,A}(F)$ has a colimit $\colimm^A(\F(F))$ in $A/\U$. Since $\modal^A$ preserves colimits (\autoref{mod2coh}), we have the following colimiting cocone and natural isomorphism in $\mleft(A/\U\mright)_{\modal}$:
\[\begin{tikzcd}[column sep = 22, 
/tikz/column 1/.append style={column sep=11},
/tikz/column 2/.append style={column sep=11}]
	{\modal^A(\F(F_i))} && {\modal^A(\F(F_i))} & {F_i} && {F_j} \\
	& {\modal^A(\colimm^A(\F(F)))} && {\modal^A(\F(F_i))} && {\modal^A(\F(F_j))}
	\arrow["{{{\modal^A(\F(F_{i,j,g}))}}}", from=1-1, to=1-3]
	\arrow[from=1-1, to=2-2]
	\arrow[from=1-3, to=2-2]
	\arrow["{{{F_{i,j,g}}}}", from=1-4, to=1-6]
	\arrow["\simeq"', from=1-4, to=2-4]
	\arrow["{\eta^A}", from=1-4, to=2-4]
	\arrow["\simeq", from=1-6, to=2-6]
	\arrow["{\eta^A}"', from=1-6, to=2-6]
	\arrow["{{{\modal^A(\F(F_{i,j,g}))}}}"', from=2-4, to=2-6]
\end{tikzcd}\]
where $\eta^A(Z) \coloneqq \mleft(\eta_{\ty(Z)} , \refl\mright) : Z \to_A \modal^A(Z)$ for all $Z : A/\U$. 
It suffices to prove composing with natural isomorphisms preserves colimiting cocones.
This property has a simple proof in univalent wild bicategories, such as $\mleft(A/\U\mright)_{\modal}$ (given the univalence axiom). Indeed, univalence reduces the natural isomorphism $F \Rightarrow \modal^A \circ \F$ to the identity, and a standard bicategorical property implies that composing with the identity preserves colimiting cocones.
\end{proof}

Our construction of colimits of modal types is simpler than the Book proof (see \autoref{modsum}) and illuminates a higher coherence used by the latter. Equality (7.4.11) of the Book proof secretly requires a coherence condition between $\lN{-}\rN_n$'s composition law and its naturality data $\nat_n$, which is satisfied because $\lN{g \circ f}\rN_n \circ \lvert{-}\rvert_n \equiv \lvert{-}\rvert_n \circ g \circ f$. The required coherence has a similar flavor to \autoref{2coherdef}, but our proof makes such a condition explicit.

\section{Conclusion and future work}

We addressed a coherence problem in the proof that a left adjoint between wild categories preserves colimits. We proved that the coherence, which always holds in the classical setting, may be false for wild categories. We even supplied a wild left adjoint that outright fails to preserve colimits. With just ``off-the-shelf'' tools from HoTT, we identified a relatively tractable sufficient condition on the left adjoint for the proof of \emph{LAPC} to work, namely \emph{$2$-coherence}. We showed that the suspension functor is $2$-coherent and thus preserves colimits. In doing so, we managed to avoid an infeasible equality proof by developing a higher-dimensional version of Cavallo's trick for homogeneous types. We verified that a similar argument works for the join functor. Finally, we showed that modalities on coslices of a universe are $2$-coherent and, as a result, that the associated subcategories of modal types are cocomplete.

There are a few open questions raised by our work. The simplest is the analysis of the dual statement that right adjoints preserve limits for wild categories, which should be similar to the one presented here. Another question is whether we can extend \autoref{mod2coh} to all reflective subuniverses.
Finally, it would be quite useful to find a trick to show, in Book HoTT, that the smash product $X \land {-} : \U_{\ast} \to \U_{\ast}$ is $2$-coherent for each pointed type $X$. Like the join, the smash product generalizes the suspension~\cite[Proposition 4.2.1]{brunthesis}. The right adjoint of the smash product, however, is the pointed map space $X \to_{\ast} {-}$~\cite[Theorem 4.3.28]{vanDoornthesis}, which is not generally valued in homogeneous types. Hence we cannot use \autoref{trick} to escape the infeasible equality proof, which is likely even harder than the ones for the suspension and join.

\section*{Acknowledgment}
This material is based upon work supported by the Air Force Office of Scientific Research under award number FA9550-21-1-0009.
Any opinions, findings, and conclusions or recommendations expressed in this material are those of the author(s) and do not
necessarily reflect the views of the United States Air Force.

\bibliographystyle{alphaurl}
\bibliography{lapc}

\end{document}